\newtheorem{theorem}{Theorem}
\newtheorem{corollary}{Corollary}
\newtheorem{definition}{Definition}
\newtheorem{proposition}{Proposition}
\newtheorem{example}{Example}
\DeclareMathOperator{\Tr}{Tr}
\newcommand{\ketbra}[2]{\ket{#1}\!\bra{#2}}
\begin{document}
\title{Multi-player quantum data hiding by nonlocal quantum state ensembles}
\author{Donghoon Ha}
\affiliation{Department of Applied Mathematics and Institute of Natural Sciences, Kyung Hee University, Yongin 17104, Republic of Korea}
\author{Jeong San Kim}
\email{freddie1@khu.ac.kr}
\affiliation{Department of Applied Mathematics and Institute of Natural Sciences, Kyung Hee University, Yongin 17104, Republic of Korea}
\begin{abstract}
We provide multi-player quantum data hiding based on nonlocal quantum state ensembles arising from multi-party quantum state discrimination.
Using bounds on local minimum-error discrimination of multi-party quantum states, we construct a multi-player quantum data-hiding scheme.
Our data-hiding scheme can be used to hide multiple bits, asymptotically, unless all the players collaborate.
We also illustrate our results by examples of nonlocal quantum state ensembles.
\end{abstract}
\maketitle
\section{Introduction}
Quantum state discrimination is one of the most fundamental tasks in quantum information processing\cite{chef2000,berg2007,barn20091,bae2015}.
In general, orthogonal quantum states can be perfectly discriminated by using an appropriate measurement.
On the other hand, nonorthogonal quantum states cannot be perfectly discriminated using any measurement.
For these reasons, much attention has been shown for optimal strategies to discriminate nonorthogonal quantum states\cite{hels1969,ivan1987,diek1988,pere1988,crok2006}.

In multi-party quantum systems, there are orthogonal quantum states that cannot be perfectly discriminated only by \emph{local operations and classical communication}(LOCC)\cite{benn19991,ghos2001,walg2002,chit20142}.
As orthogonal quantum states can be perfectly discriminated by global measurements, non-perfect discrimination of multi-party orthogonal states only by LOCC measurements implies quantum nonlocality in state discrimination.
This nonlocal phenomenon also occurs in discriminating nonorthogonal quantum states which cannot be optimally discriminating using only LOCC\cite{pere1991,duan2007,chit2013}.
To characterize the nonlocality in quantum state discrimination, there has been a lot of research effort focused on finding nonlocal quantum state ensembles as well as possible applications in quantum information processing\cite{fan2004,duan2009,chid2013,terh20011,divi2002}. 

Quantum data hiding is one representative application of quantum nonlocality of multi-party quantum state discrimination\cite{terh20011,divi2002}.
In quantum data hiding, information is hidden 
from multiple players by using multi-party quantum states.
The hidden data can be perfectly recovered by global measurements that can be achieved only if all the players collaborate, whereas LOCC measurements cannot recover any meaningful information about the data.
So far, there have been several research results focused on nonlocal quantum state ensembles that can be used for quantum data hiding\cite{egge2002,mori2013,pian2014,peng2021,lami2021}.

Nevertheless, due to the lack of good mathematical structure for LOCC, it is still a hard task to clearly determine what type of quantum state ensemble can be used for quantum data hiding.
Recently, a relation between nonlocal quantum state ensemble and quantum data hiding processing was established based on bounds on optimal local discrimination of bipartite quantum states. 
Moreover, a sufficient condition was also provided for a bipartite quantum state ensemble to be used to construct a quantum data-hiding scheme\cite{ha20221,ha2024}.

Here, we establish multi-player data hiding based on nonlocal quantum state ensembles arising from multi-party quantum state discrimination.
Using bounds on optimal local discrimination of multi-party quantum states, we construct a multi-player quantum data-hiding scheme.
In our multi-player quantum data-hiding scheme, classical data of arbitrary size can be perfectly hidden, asymptotically, unless all the players collaborate.
Our results are illustrated by examples of multi-party quantum state ensembles.

Our data-hiding scheme utilizes orthogonal quantum state ensembles, ensuring perfect data recovery through the orthogonality of these states. 
To asymptotically reduce the information accessible about the hidden data only by LOCC measurements, we independently and repeatedly prepare a quantum state from a multi-party quantum state ensemble. 
As the number of repetitions increases, the accessible information decreases exponentially until it essentially reaches the level of random guessing. 
Consequently, our data-hiding scheme is capable of effectively concealing classical data with only a relatively small number of repetitions.
While such a scheme is well-established for two-party scenarios, its extension to cases involving more than two participants remains unexplored\cite{terh20011,divi2002,ha2024}.

This paper is organized as follows.
In Sect.~\ref{sec:pre}, we first recall the definitions and some properties about measurements in multi-party quantum systems.
We further review bounds on local minimum-error discrimination of multi-party quantum states and provide some properties useful in multi-player quantum data hiding. 
In Sect.~\ref{sec:qdh}, we establish a sufficient condition for a multi-party quantum state ensemble to be used to construct a multi-player quantum data-hiding scheme.
In Sect.~\ref{sec:ex}, we illustrate our results by examples in multi-party quantum systems.
In Sect.~\ref{sec:dsc}, we summarize our results with possible future works.

\section{Preliminaries}\label{sec:pre}
For a \emph{multi-party} quantum system consisting of $m\geqslant2$ subsystems $A_{1},\ldots,A_{m}$, a quantum state is expressed by a density operator $\rho$, that is, a positive-semidefinite operator $\rho\succeq0$ with $\Tr\rho=1$, acting on a multi-party Hilbert space 
$\mathcal{H}=\bigotimes_{k=1}^{m}\mathbb{C}^{d_{k}}$ with positive integers $d_{k}$ for $k=1,\ldots,m$.
A measurement is described by a positive operator-valued measure $\{M_{i}\}_{i}$, that is, a set of positive-semidefinite operators $M_{i}\succeq0$ on $\mathcal{H}$ satisfying $\sum_{i}M_{i}=\mathbbm{1}$, where $\mathbbm{1}$ is the identity operator on $\mathcal{H}$. 
When a measurement $\{M_{i}\}_{i}$ is performed to a quantum  state $\rho$, the probability of obtaining the measurement outcome corresponding to $M_{j}$ is $\Tr(\rho M_{j})$.
\subsection{Measurements in multi-party systems}\label{ssec:pame}
For an integer $k$ with $1\leqslant k\leqslant m$, we denote the set of all \emph{$k$-partitions} of $\{A_{1},\ldots,A_{m}\}$ as
\begin{eqnarray}\label{eq:tpx2}
\mathbb{X}_{k}=\{\{\mathbf{A}_{1},\ldots,\mathbf{A}_{k}\}&|&
\mathbf{A}_{1},\ldots,\mathbf{A}_{k}:
\mbox{mutually disjoint nonempty subsets of}~\{A_{1},\ldots,A_{m}\},\nonumber\\
&&\mathbf{A}_{1}\cup\cdots\cup\mathbf{A}_{k}=\{A_{1},\ldots,A_{m}\}\}.
\end{eqnarray}
We also denote by $\mathsf{G}$ the \emph{trivial} partition of $\{A_{1},\ldots,A_{m}\}$, that is,
\begin{equation}\label{eq:pag}
\mathsf{G}=\{\{A_{1},\ldots,A_{m}\}\},
\end{equation}
which is the unique element of $\mathbb{X}_{1}$.
The set of all partitions of $\{A_{1},\ldots,A_{m}\}$ is denoted by $\mathbb{X}$, that is,
\begin{equation}\label{eq:ntpx}
\mathbb{X}=\bigcup_{k=1}^{m}\mathbb{X}_{k},
\end{equation}
and the set of all \emph{nontrivial} partitions of $\{A_{1},\ldots,A_{m}\}$ as
\begin{equation}\label{eq:xc1}
\mathbb{X}_{1}^{\mathsf{c}}=\mathbb{X}\setminus\mathbb{X}_{1}=\bigcup_{k=2}^{m}\mathbb{X}_{k}.
\end{equation}

\begin{definition}\label{def:locc}
For a partition $\mathsf{X}\in\mathbb{X}$, we say that a measurement is a \emph{$\mathsf{X}$-LOCC measurement} if it can be realized by LOCC among the parties(elements) of $\mathsf{X}$, that is, local measurements on each party of $\mathsf{X}$ and classical communication among the parties of $\mathsf{X}$. 
We also say that a measurement is a \emph{non-global measurement} if it is a $\mathsf{X}$-LOCC measurement for some $\mathsf{X}\in\mathbb{X}_{1}^{\mathsf{c}}$.
\end{definition}
\noindent For a partition $\mathsf{X}\in\mathbb{X}$, we define $\mathbb{M}_{\mathsf{X}}$ to be the set of all $\mathsf{X}$-LOCC measurements.
In particular, $\mathbb{M}_{\mathsf{G}}$ is the set of all measurements that can be performed on the entire system. Note that a measurement is non-global if and only if it is in $\mathbb{M}_{\mathsf{X}}$ for some $\mathsf{X}\in\mathbb{X}_{1}^{\mathsf{c}}$.

For partitions $\mathsf{X},\mathsf{X}'\in\mathbb{X}$, $\mathsf{X}$ is called \emph{coarser} than $\mathsf{X}'$ if every element of $\mathsf{X}'$ is a subset of some element of $\mathsf{X}$.
Obviously, $\mathsf{G}$ is coarser than any $\mathsf{X}\in\mathbb{X}$.
For any $\mathsf{X}\in\mathbb{X}_{1}^{\mathsf{c}}$,
we also note that there exists a bipartition $\mathsf{X}'\in\mathbb{X}_{2}$ coarser than $\mathsf{X}$; 
two elements of $\mathsf{X}'$ can be constructed by fixing one element of $\mathsf{X}$ and the union of the rests.
If $\mathsf{X}$ is coarser than $\mathsf{X}'$, then 
Definition~\ref{def:locc} implies
$\mathbb{M}_{\mathsf{X}'}\subseteq\mathbb{M}_{\mathsf{X}}$.
Therefore, we have
\begin{equation}\label{eq:xcr}
\bigcup_{\mathsf{X}\in\mathbb{X}_{1}^{\mathsf{c}}}\mathbb{M}_{\mathsf{X}}=\bigcup_{\mathsf{X}\in\mathbb{X}_{2}}\mathbb{M}_{\mathsf{X}}
\subseteq
\mathbb{M}_{\mathsf{G}},
\end{equation}
where the equality is due to $\mathbb{X}_{1}^{\mathsf{c}}\supseteq\mathbb{X}_{2}$.

\subsection{Local minimum-error discrimination}\label{ssec:blmd}
For a multi-party quantum state ensemble,
\begin{equation}\label{eq:ens}
\mathcal{E}=\{\eta_{i},\rho_{i}\}_{i=0}^{n-1},
\end{equation}
let us consider the situation that the state $\rho_{i}$ is prepared with the probability $\eta_{i}$ for $i=0,\ldots,n-1$. 
In discriminating the states from the ensemble $\mathcal{E}$ using a measurement $\mathcal{M}=\{M_{i}\}_{i=0}^{n-1}$, 
the prepared state is guessed to be $\rho_{i}$ for the measurement outcome obtained from $M_{i}$.
In this case, the average probability of correctly guessing the prepared state is
\begin{equation}\label{eq:pcg}
\sum_{i=0}^{n-1}\eta_{i}\Tr(\rho_{i}M_{i}).
\end{equation}
For a partition $\mathsf{X}\in\mathbb{X}$, we denote by $p_{\mathsf{X}}(\mathcal{E})$ the maximum of success probability that can be obtained by using $\mathsf{X}$-LOCC measurements, that is,
\begin{equation}\label{eq:ple}
p_{\mathsf{X}}(\mathcal{E})=\max_{\mathcal{M}\in\mathbb{M}_{\mathsf{X}}}\sum_{i=0}^{n-1}\eta_{i}\Tr(\rho_{i}M_{i}).
\end{equation}
The \emph{minimum-error discrimination} of $\mathcal{E}$ is to achieve the maximum success probability $p_{\mathsf{X}}(\mathcal{E})$ for $\mathsf{X}=\mathsf{G}$, that is,
\begin{equation}\label{eq:pgdef}
p_{\mathsf{G}}(\mathcal{E})=\max_{\mathcal{M}\in\mathbb{M}_{\mathsf{G}}}\sum_{i=0}^{n-1}\eta_{i}\Tr(\rho_{i}M_{i}),
\end{equation}
where the maximum is taken over all possible measurements\cite{hels1969}.

From Eq.~\eqref{eq:xcr}, we have
\begin{equation}\label{eq:pxpx}
\max_{\mathsf{X}\in\mathbb{X}_{1}^{\mathsf{c}}}p_{\mathsf{X}}(\mathcal{E})
=\max_{\mathsf{X}\in\mathbb{X}_{2}}p_{\mathsf{X}}(\mathcal{E})
\leqslant p_{\mathsf{G}}(\mathcal{E}).
\end{equation}
We also note that deciding the prepared quantum state as the state with the largest probability is obviously a $\mathsf{X}$-LOCC measurement for any partition $\mathsf{X}\in\mathbb{X}$; thus, we have
\begin{equation}\label{eq:inq}
\max\{\eta_{0},\ldots,\eta_{n-1}\}\leqslant p_{\mathsf{X}}(\mathcal{E})
\end{equation}
for all partitions $\mathsf{X}\in\mathbb{X}$.

Given a multi-party quantum state ensemble $\mathcal{E}$ in Eq.~\eqref{eq:ens} and a bipartition $\mathsf{X}\in\mathbb{X}_{2}$, we provide an upper bound of $p_{\mathsf{X}}(\mathcal{E})$ in Eq.~\eqref{eq:ple} based on \emph{partial transposition} as well as some related properties useful in quantum data hiding.
For a multi-party state $\rho$ and a bipartition $\mathsf{X}\in\mathbb{X}_{2}$, we denote $\Gamma_{\mathsf{X}}(\rho)$ as the partial transposition of $\rho$ with respect to the bipartition $\mathsf{X}$\cite{pere1996,pptp}. 

Now, let us consider 
\begin{equation}\label{eq:qge}
q_{\mathsf{X}}(\mathcal{E})=\max_{\mathcal{M}\in\mathbb{M}_{\mathsf{G}}}\sum_{i=0}^{n-1}\eta_{i}\Tr[\Gamma_{\mathsf{X}}(\rho_{i})M_{i}],
\end{equation}
where the maximum is taken over all possible measurements.
This quantity is known as an upper bound of $p_{\mathsf{X}}(\mathcal{E})$\cite{ha20221}, that is,
\begin{equation}\label{eq:upb}
p_{\mathsf{X}}(\mathcal{E})\leqslant q_{\mathsf{X}}(\mathcal{E}).
\end{equation}
The following proposition provides a necessary and sufficient condition of a measurement realizing $q_{\mathsf{X}}(\mathcal{E})$\cite{ha20221}.

\begin{proposition}\label{prop:qgm}
For a multi-party quantum state ensemble $\mathcal{E}=\{\eta_{i},\rho_{i}\}_{i=0}^{n-1}$, a measurement $\mathcal{M}=\{M_{i}\}_{i=0}^{n-1}$ and a bipartition $\mathsf{X}\in\mathbb{X}_{2}$,
$\mathcal{M}$ realizes $q_{\mathsf{X}}(\mathcal{E})$ if and only if 
\begin{equation}\label{eq:nsqg}
\sum_{j=0}^{n-1}\eta_{j}\Gamma_{\mathsf{X}}(\rho_{j})M_{j}-\eta_{i}\Gamma_{\mathsf{X}}(\rho_{i})\succeq 0
\end{equation}
for each $i=0,\ldots,n-1$.
\end{proposition}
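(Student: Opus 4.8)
The plan is to read $q_{\mathsf{X}}(\mathcal{E})$ as the maximization of a real linear functional over the convex set of all global measurements, and to recognize Eq.~\eqref{eq:nsqg} as the corresponding Holevo--Helstrom--Yuen--Kennedy--Lax optimality condition adapted from density operators to arbitrary Hermitian operators. Writing $A_{i}=\eta_{i}\Gamma_{\mathsf{X}}(\rho_{i})$, each $A_{i}$ is Hermitian because the partial transposition of a Hermitian operator is Hermitian, although $A_{i}$ need not be positive-semidefinite. The crucial observation is that the standard optimality argument for minimum-error discrimination uses only the Hermiticity of the weighted states and never their positivity, so it transfers verbatim to the operators $A_{i}$. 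Thus the target is to show that a measurement $\mathcal{M}=\{M_{i}\}_{i=0}^{n-1}$ maximizes $\sum_{i}\Tr(A_{i}M_{i})$ over $\mathbb{M}_{\mathsf{G}}$ if and only if $K-A_{i}\succeq0$ for every $i$, where $K=\sum_{j}A_{j}M_{j}$.

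For the sufficiency direction I would argue directly. Assume Eq.~\eqref{eq:nsqg} holds; since each $A_{i}$ is Hermitian, the condition forces $K$ to be Hermitian as well. For an arbitrary measurement $\{N_{i}\}_{i=0}^{n-1}\in\mathbb{M}_{\mathsf{G}}$, the relations $K-A_{i}\succeq0$ and $N_{i}\succeq0$ give $\Tr(A_{i}N_{i})\leqslant\Tr(KN_{i})$; summing over $i$ and using $\sum_{i}N_{i}=\mathbbm{1}$ yields $\sum_{i}\Tr(A_{i}N_{i})\leqslant\Tr K$. On the other hand, $\Tr K=\sum_{j}\Tr(A_{j}M_{j})$ is exactly the value attained by $\mathcal{M}$, so $\mathcal{M}$ achieves the maximum and therefore realizes $q_{\mathsf{X}}(\mathcal{E})$.

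For the necessity direction I would invoke semidefinite duality. The primal program $\max\{\sum_{i}\Tr(A_{i}M_{i}):M_{i}\succeq0,\ \sum_{i}M_{i}=\mathbbm{1}\}$ admits the strictly feasible point $M_{i}=\mathbbm{1}/n$, so Slater's condition holds and strong duality is in force, the dual being $\min\{\Tr L:L=L^{\dagger},\ L-A_{i}\succeq0\ \forall i\}$. Given the optimal $\mathcal{M}$, choose any optimal dual $L$; complementary slackness then reads $(L-A_{i})M_{i}=0$, i.e.\ $LM_{i}=A_{i}M_{i}$ for each $i$. Summing over $i$ and using $\sum_{i}M_{i}=\mathbbm{1}$ identifies the dual optimum as $L=\sum_{j}A_{j}M_{j}=K$, whence the dual feasibility $L-A_{i}\succeq0$ becomes precisely Eq.~\eqref{eq:nsqg}.

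I expect the main obstacle to lie in the necessity direction, specifically in justifying strong duality and the complementary-slackness identity rather than in any explicit computation. If one prefers to avoid invoking the duality theorem as a black box, the same conclusion can be reached by a first-order variational argument: perturbing the optimal measurement along $M_{i}(\epsilon)=(1-\epsilon)M_{i}+\epsilon N_{i}$ shows that $\Tr K$ upper-bounds every feasible value, while more delicate two-element perturbations that preserve both positivity and the normalization $\sum_{i}M_{i}=\mathbbm{1}$ force first the Hermiticity of $K$ and then the operator inequality $K-A_{i}\succeq0$. The care needed to keep these perturbations feasible against the constraints $M_{i}\succeq0$ is the technical crux; by contrast, the positivity-free adaptation from $\rho_{i}$ to $\Gamma_{\mathsf{X}}(\rho_{i})$ introduces no new difficulty.
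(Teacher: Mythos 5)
Your proof is correct, but there is nothing in the paper to compare it against: the paper never proves Proposition~\ref{prop:qgm}, importing it instead as a known result from Ref.~\cite{ha20221}, so the only question is whether your argument stands on its own --- and it does. Your route is the standard one (and the one underlying the cited reference): view $q_{\mathsf{X}}(\mathcal{E})$ as a semidefinite program over POVMs and identify Eq.~\eqref{eq:nsqg} as the Holevo--Yuen--Kennedy--Lax optimality condition, with the key observation that the classical argument uses only Hermiticity of the operators $A_{i}=\eta_{i}\Gamma_{\mathsf{X}}(\rho_{i})$, never their positivity. Both halves check out: sufficiency is the elementary estimate $\sum_{i}\Tr(A_{i}N_{i})\leqslant\Tr\bigl(K\sum_{i}N_{i}\bigr)=\Tr K$ with $K=\sum_{j}A_{j}M_{j}$, valid because the trace of a product of two positive-semidefinite operators is nonnegative; necessity is legitimate because the POVM set is compact (so the primal optimum is attained), $M_{i}=\mathbbm{1}/n$ is strictly feasible (so Slater gives zero gap and dual attainment), and complementary slackness $(L-A_{i})M_{i}=0$ --- which follows from $\Tr[(L-A_{i})M_{i}]=0$ for positive-semidefinite factors --- forces the dual optimum to equal $K$, turning dual feasibility into Eq.~\eqref{eq:nsqg}.
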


For an ensemble $\mathcal{E}=\{\eta_{i},\rho_{i}\}_{i=0}^{n-1}$ and a bipartition $\mathsf{X}\in\mathbb{X}_{2}$, the following theorem provides a necessary and sufficient condition for both $p_{\mathsf{X}}(\mathcal{E})$ and $q_{\mathsf{X}}(\mathcal{E})$ to be the probability $\eta_{0}$.

\begin{theorem}\label{thm:qxe0}
For a multi-party quantum state ensemble $\mathcal{E}=\{\eta_{i},\rho_{i}\}_{i=0}^{n-1}$ and a bipartition $\mathsf{X}\in\mathbb{X}_{2}$, we have 
\begin{equation}\label{eq:qxe0}
p_{\mathsf{X}}(\mathcal{E})=q_{\mathsf{X}}(\mathcal{E})=\eta_{0}
\end{equation}
if and only if
\begin{equation}\label{eq:trh}
\eta_{0}\Gamma_{\mathsf{X}}(\rho_{0})-\eta_{i}\Gamma_{\mathsf{X}}(\rho_{i})\succeq0
\end{equation}
for each $i=1,\ldots,n-1$.
\end{theorem}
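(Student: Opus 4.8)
The plan is to exploit the single ``always-guess-$\rho_0$'' measurement $\mathcal{M}^\star=\{M_0=\mathbbm{1},\,M_i=0\ (i\geqslant1)\}$, which I expect to serve as the optimizer on both sides of the equivalence. The crucial observation is that, since partial transposition preserves the trace, $\Tr[\Gamma_{\mathsf{X}}(\rho_0)]=\Tr\rho_0=1$, so the objective in Eq.~\eqref{eq:qge} evaluated at $\mathcal{M}^\star$ equals $\eta_0\Tr[\Gamma_{\mathsf{X}}(\rho_0)]=\eta_0$. Thus $\mathcal{M}^\star$ is always a feasible measurement attaining the value $\eta_0$ for $q_{\mathsf{X}}$, and the whole theorem reduces to characterizing when $\mathcal{M}^\star$ is optimal for $q_{\mathsf{X}}(\mathcal{E})$.

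For the direction assuming Eq.~\eqref{eq:trh}, I would substitute $\mathcal{M}^\star$ into the optimality criterion of Proposition~\ref{prop:qgm}. Because $\sum_j\eta_j\Gamma_{\mathsf{X}}(\rho_j)M_j=\eta_0\Gamma_{\mathsf{X}}(\rho_0)$ for this choice, condition~\eqref{eq:nsqg} for index $i$ becomes exactly $\eta_0\Gamma_{\mathsf{X}}(\rho_0)-\eta_i\Gamma_{\mathsf{X}}(\rho_i)\succeq0$, which is trivially $0$ for $i=0$ and is the hypothesis~\eqref{eq:trh} for $i\geqslant1$. Hence $\mathcal{M}^\star$ realizes $q_{\mathsf{X}}(\mathcal{E})$, giving $q_{\mathsf{X}}(\mathcal{E})=\eta_0$. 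I would then close the chain with the sandwich $\eta_0\leqslant p_{\mathsf{X}}(\mathcal{E})\leqslant q_{\mathsf{X}}(\mathcal{E})=\eta_0$, using Eq.~\eqref{eq:inq} on the left and the upper bound~\eqref{eq:upb} in the middle, forcing $p_{\mathsf{X}}(\mathcal{E})=q_{\mathsf{X}}(\mathcal{E})=\eta_0$.

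For the converse, I would need only the hypothesis $q_{\mathsf{X}}(\mathcal{E})=\eta_0$; indeed, $\eta_0\leqslant p_{\mathsf{X}}\leqslant q_{\mathsf{X}}$ already makes $p_{\mathsf{X}}=\eta_0$ automatic once $q_{\mathsf{X}}=\eta_0$. Since $\mathcal{M}^\star$ attains the value $\eta_0=q_{\mathsf{X}}(\mathcal{E})$, it is an optimal measurement for $q_{\mathsf{X}}(\mathcal{E})$; applying the necessity part of Proposition~\ref{prop:qgm} to $\mathcal{M}^\star$ then yields $\eta_0\Gamma_{\mathsf{X}}(\rho_0)-\eta_i\Gamma_{\mathsf{X}}(\rho_i)\succeq0$ for every $i$, in particular Eq.~\eqref{eq:trh} for $i\geqslant1$.

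The argument carries no serious obstacle: the only nonroutine step is recognizing the right candidate measurement and invoking the trace-invariance of partial transposition so that the objective value collapses to $\eta_0$; after that, Proposition~\ref{prop:qgm} does all the work in both directions. The point I would take care over is the bookkeeping in reducing Eq.~\eqref{eq:nsqg} at $\mathcal{M}^\star$ to Eq.~\eqref{eq:trh}, noting that the $i=0$ instance is automatically satisfied and therefore imposes no extra constraint.
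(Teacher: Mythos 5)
Your proposal is correct and follows essentially the same route as the paper's proof: both use the ``always-guess-$\rho_0$'' measurement $\{M_0=\mathbbm{1},\,M_i=\mathbb{O}\ (i\geqslant1)\}$, apply Proposition~\ref{prop:qgm} in both directions, and close with the sandwich $\eta_0\leqslant p_{\mathsf{X}}(\mathcal{E})\leqslant q_{\mathsf{X}}(\mathcal{E})=\eta_0$ from Inequalities~\eqref{eq:inq} and \eqref{eq:upb}. Your write-up is in fact slightly more explicit than the paper's (spelling out the trace-invariance of $\Gamma_{\mathsf{X}}$, the triviality of the $i=0$ instance of \eqref{eq:nsqg}, and that the converse needs only $q_{\mathsf{X}}(\mathcal{E})=\eta_0$), but these are presentational refinements, not a different argument.
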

\begin{proof}
Let $\mathcal{M}$ be the measurement $\{M_{i}\}_{i=0}^{n-1}$ satisfying
\begin{equation}\label{eq:mop}
M_{0}=\mathbbm{1},~M_{1}=\cdots=M_{n-1}=\mathbb{O}
\end{equation}
where $\mathbb{O}$ is the zero operator on $\mathcal{H}$.
We first suppose Eq.~\eqref{eq:qxe0}. Since $\mathcal{M}$ is obviously a measurement providing $q_{\mathsf{X}}(\mathcal{E})$, it follows from Proposition~\ref{prop:qgm} that Condition~\eqref{eq:trh} holds.

Conversely, let us assume Condition~\eqref{eq:trh}. This assumption implies that the measurement $\mathcal{M}$ satisfies Condition~\eqref{eq:nsqg}. Therefore, we have
\begin{equation}\label{eq:cope0}
q_{\mathsf{X}}(\mathcal{E})=\sum_{i=0}^{n-1}\eta_{i}\Tr[\Gamma_{\mathsf{X}}(\rho_{i})M_{i}]=\eta_{0},
\end{equation}
where the first equality is by Proposition~\ref{prop:qgm}.
Moreover, the first equality in Eq.~\eqref{eq:qxe0} holds because
\begin{equation}
\eta_{0}\leqslant p_{\mathsf{X}}(\mathcal{E})\leqslant q_{\mathsf{X}}(\mathcal{E})
=\eta_{0},
\end{equation}
where the first inequality is due to Inequality~\eqref{eq:inq}, the second inequality is by Inequality~\eqref{eq:upb}, and the last inequality follows from Eq.~\eqref{eq:cope0}.
\end{proof}
\noindent
As we can check in the proof of Theorem~\ref{thm:qxe0}, the choice of $\eta_{0}$ can be arbitrary. That is, any of $\{\eta_{i}\}_{i=0}^{n-1}$ can be used as long as Eq.~\eqref{eq:trh} holds in terms of $\eta_{i}$.

\section{Multi-player quantum data hiding}\label{sec:qdh}
In this section, we provide a sufficient condition for a multi-party quantum state ensemble to be used to construct a multi-player quantum data-hiding scheme.
To analyze how reliably the data are hidden, we first introduce the notion of \emph{multi-fold ensemble}.

For positive integers $n$ and $L$, let 
$\mathbb{Z}_{n}$ be the set of all integers from $0$ to $n-1$ and $\mathbb{Z}_{n}^{L}$ the Cartesian product of $L$ copies of $\mathbb{Z}_{n}$. 
For the multi-party quantum state ensemble $\mathcal{E}=\{\eta_{i},\rho_{i}\}_{i=0}^{n-1}$ in Eq.~\eqref{eq:ens} and each vector
\begin{equation}
\vec{c}=(c_{1},\ldots,c_{L})\in\mathbb{Z}_{n}^{L},
\end{equation}
we define
\begin{equation}\label{eq:erlf}
\eta_{\vec{c}}=\prod_{l=1}^{L}\eta_{c_{l}},~
\rho_{\vec{c}}=\bigotimes_{l=1}^{L}\rho_{c_{l}}
\end{equation}
where $\rho_{c_{l}}$ is the state in the ensemble $\mathcal{E}$ whose index is the $l$th coordinate of the vector $\vec{c}$, and $\eta_{c_{l}}$ is the corresponding probability for $l=1,\ldots,L$.
We further use $\mathcal{E}^{\otimes L}$ to denote the \emph{$L$-fold ensemble} of $\mathcal{E}$, that is,
\begin{equation}\label{eq:lfe}
\mathcal{E}^{\otimes L}=\{\eta_{\vec{c}},\rho_{\vec{c}}\}_{\vec{c}\in\mathbb{Z}_{n}^{L}}.
\end{equation}
For each $\vec{c}\in\mathbb{Z}_{n}^{L}$, we define $\omega_{n}(\vec{c})$ 
as the modulo-$n$ summation of all entries in $\vec{c}=(c_{1},\ldots,c_{L})$, that is,
\begin{equation}\label{eq:mns}
\omega_{n}(\vec{c})=\sum_{l=1}^{L}c_{l}~(\mathrm{mod}~n).
\end{equation}
Clearly, we have $\omega_{n}(\vec{c})\in\mathbb{Z}_{n}$ for any $\vec{c}\in\mathbb{Z}_{n}^{L}$.

For the state $\rho_{\vec{c}}$ prepared from the ensemble $\mathcal{E}^{\otimes L}$, let us consider the situation of guessing $\omega_{n}(\vec{c})$ of the prepared state $\rho_{\vec{c}}$ from $\mathcal{E}^{\otimes L}$.
This situation is equivalent to discriminating the states $\rho_{0}^{(L)},\ldots,\rho_{n-1}^{(L)}$ prepared with the probabilities $\eta_{0}^{(L)},\ldots,\eta_{n-1}^{(L)}$, respectively, where
\begin{eqnarray}
\eta_{i}^{(L)}=
\sum_{\substack{\vec{c}\in\mathbb{Z}_{n}^{L}\\ \omega_{n}(\vec{c})=i}}
\eta_{\vec{c}},~
\rho_{i}^{(L)}=\frac{1}{\eta_{i}^{(L)}}
\sum_{\substack{\vec{c}\in\mathbb{Z}_{n}^{L}\\ \omega_{n}(\vec{c})=i}}
\eta_{\vec{c}}
\rho_{\vec{c}}.\label{eq:rhil}
\end{eqnarray}
In other words, quantum state discrimination of the ensemble 
\begin{equation}\label{eq:eld}
\mathcal{E}^{(L)}=\{\eta_{i}^{(L)},\rho_{i}^{(L)}\}_{i=0}^{n-1}.
\end{equation}

In the data-hiding scheme described later, guessing the hidden $n$-ary classical data $x$ by $m$ players is equivalent to discriminating among the $n$ quantum states prepared from the ensemble $\mathcal{E}^{(L)}$ in Eq.~\eqref{eq:eld}. 
Moreover, the number of folds, $L$, in Eq.~\eqref{eq:eld} acts as the parameter that asymptotically conceals the data $x$. 
As $L$ increases, the information accessible about $x$ without full cooperative action among $m$ players decreases exponentially, eventually approaching the level of random guessing. 
In other words, the maximum average probability of discriminating the states from $\mathcal{E}^{(L)}$ using only non-global measurements converges exponentially to $1/n$ as $L$ increases. 
In the subsequent propositions and corollaries, we provide sufficient conditions that guarantee this convergence.

For a given multi-party quantum state ensemble $\mathcal{E}=\{\eta_{i},\rho_{i}\}_{i=0}^{n-1}$, a bipartition $\mathsf{X}\in\mathbb{X}_{2}$ and a positive integer $L$, the following proposition gives an upper bound of $p_{\mathsf{X}}(\mathcal{E}^{(L)})$\cite{ha2024}.

\begin{proposition}\label{prop:lqge}
For a multi-party quantum state ensemble $\mathcal{E}=\{\eta_{i},\rho_{i}\}_{i=0}^{n-1}$, a bipartition $\mathsf{X}\in\mathbb{X}_{2}$ and a positive integer $L$, we have
\begin{equation}\label{eq:qgel}
q_{\mathsf{X}}(\mathcal{E}^{(L)})\leqslant\frac{1}{n}+\frac{n-1}{n}(n\cdot q_{\mathsf{X}}(\mathcal{E})-1)^{L}.
\end{equation}
Moreover, if $q_{\mathsf{X}}(\mathcal{E})<\frac{2}{n}$, then
\begin{equation}\label{eq:pln}
\lim_{L\rightarrow\infty}p_{\mathsf{X}}(\mathcal{E}^{(L)})=\frac{1}{n}.
\end{equation}
\end{proposition}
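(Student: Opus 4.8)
The plan is to reduce the inequality to a single-copy estimate by exploiting the $\mathbb{Z}_{n}$-structure hidden in the modular-sum grouping that defines $\mathcal{E}^{(L)}$. Writing $\zeta=e^{2\pi\mathrm{i}/n}$ and
\[
R_{k}=\sum_{j=0}^{n-1}\eta_{j}\zeta^{kj}\Gamma_{\mathsf{X}}(\rho_{j}),\qquad k=0,\ldots,n-1,
\]
I would first observe that, because $\rho_{\vec{c}}=\bigotimes_{l}\rho_{c_{l}}$ and partial transposition acts factorwise, $\Gamma_{\mathsf{X}}(\rho_{\vec{c}})=\bigotimes_{l}\Gamma_{\mathsf{X}}(\rho_{c_{l}})$. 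Inserting the identity $\tfrac{1}{n}\sum_{k=0}^{n-1}\zeta^{k(\omega_{n}(\vec{c})-i)}$, which equals $1$ when $\omega_{n}(\vec{c})=i$ and $0$ otherwise, into Eq.~\eqref{eq:rhil} then yields the clean Fourier decomposition
\[
\eta_{i}^{(L)}\Gamma_{\mathsf{X}}(\rho_{i}^{(L)})=\frac{1}{n}\sum_{k=0}^{n-1}\zeta^{-ki}R_{k}^{\otimes L}.
\]
This is the conceptual heart of the argument: the entire dependence on $L$ is now carried by tensor powers of the fixed operators $R_{k}$.

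Next I would feed this into $q_{\mathsf{X}}(\mathcal{E}^{(L)})=\max_{\mathcal{M}\in\mathbb{M}_{\mathsf{G}}}\sum_{i}\Tr[\eta_{i}^{(L)}\Gamma_{\mathsf{X}}(\rho_{i}^{(L)})M_{i}]$. Since $R_{0}=\Gamma_{\mathsf{X}}(\sum_{j}\eta_{j}\rho_{j})$ satisfies $\Tr R_{0}=1$, the $k=0$ term contributes $\tfrac{1}{n}\Tr[R_{0}^{\otimes L}\sum_{i}M_{i}]=\tfrac{1}{n}(\Tr R_{0})^{L}=\tfrac{1}{n}$ for every POVM, accounting exactly for the leading $1/n$. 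Setting $S_{i}=\sum_{k=1}^{n-1}\zeta^{-ki}R_{k}^{\otimes L}$, the remaining task is to show $\max_{\mathcal{M}}\tfrac{1}{n}\sum_{i}\Tr[S_{i}M_{i}]\le\tfrac{n-1}{n}(nq_{\mathsf{X}}(\mathcal{E})-1)^{L}$. Two structural facts make this tractable: $R_{n-k}=R_{k}^{\dagger}$, so each $S_{i}$ is Hermitian, and $\sum_{i}\zeta^{-ki}=0$ for $k\neq0$ forces $\sum_{i}S_{i}=\mathbb{O}$.

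For the bound itself I would use POVM positivity: for each $i$, $M_{i}\preceq\mathbbm{1}$ and $S_{i}^{+}\succeq0$ give $\Tr[S_{i}M_{i}]\le\Tr[S_{i}^{+}M_{i}]\le\Tr[S_{i}^{+}]=\tfrac{1}{2}(\|S_{i}\|_{1}+\Tr S_{i})$, so that summing over $i$ and invoking $\sum_{i}S_{i}=\mathbb{O}$ cancels the trace terms and leaves $\sum_{i}\Tr[S_{i}M_{i}]\le\tfrac{1}{2}\sum_{i}\|S_{i}\|_{1}$. Multiplicativity of the trace norm under tensor products gives $\|R_{k}^{\otimes L}\|_{1}=\|R_{k}\|_{1}^{L}$, while the $L=1$ instance of the same decomposition identifies $nq_{\mathsf{X}}(\mathcal{E})-1=\max_{\mathcal{M}}\sum_{i}\Tr[S_{i}M_{i}]\big|_{L=1}$ as the correct single-copy quantity. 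The main obstacle is the final sharp step: dominating $\tfrac{1}{2}\sum_{i}\|S_{i}\|_{1}$ by $(n-1)(nq_{\mathsf{X}}(\mathcal{E})-1)^{L}$. A naive triangle inequality on $S_{i}=\sum_{k}\zeta^{-ki}R_{k}^{\otimes L}$ loses a factor of order $n$; the correct constant $\tfrac{n-1}{n}$ must instead come from the cancellation among the $n$ equally phase-rotated combinations $\{S_{i}\}_{i}$ (already visible in the scalar case, where $\sum_{i}|\cos(\theta-2\pi i/n)|$ is uniformly bounded), upgraded to a trace-norm averaging inequality for the operators $R_{k}^{\otimes L}$ and combined with the estimate of $\|R_{k}\|_{1}$ by $nq_{\mathsf{X}}(\mathcal{E})-1$. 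This averaging estimate is where the real work lies and where I would concentrate the effort.

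Finally, the limit~\eqref{eq:pln} follows by squeezing. Inequality~\eqref{eq:inq} applied to $\mathcal{E}^{(L)}$ gives $p_{\mathsf{X}}(\mathcal{E}^{(L)})\ge\max_{i}\eta_{i}^{(L)}\ge\tfrac{1}{n}$, since the $n$ probabilities $\eta_{i}^{(L)}$ sum to $1$. On the other hand $p_{\mathsf{X}}(\mathcal{E}^{(L)})\le q_{\mathsf{X}}(\mathcal{E}^{(L)})$ by Inequality~\eqref{eq:upb}, and the hypothesis $q_{\mathsf{X}}(\mathcal{E})<\tfrac{2}{n}$ together with $q_{\mathsf{X}}(\mathcal{E})\ge p_{\mathsf{X}}(\mathcal{E})\ge\tfrac{1}{n}$ yields $0\le nq_{\mathsf{X}}(\mathcal{E})-1<1$. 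Hence $(nq_{\mathsf{X}}(\mathcal{E})-1)^{L}\to0$, the upper bound~\eqref{eq:qgel} tends to $\tfrac{1}{n}$, and therefore $\lim_{L\to\infty}p_{\mathsf{X}}(\mathcal{E}^{(L)})=\tfrac{1}{n}$.
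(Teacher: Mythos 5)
A preliminary remark: the paper itself gives no proof of Proposition~\ref{prop:lqge}; it is imported from Ref.~\cite{ha2024}, so your attempt can only be measured against the argument that is known to close it. Your opening move is exactly the right one and matches that argument: the Fourier decomposition $\eta_{i}^{(L)}\Gamma_{\mathsf{X}}(\rho_{i}^{(L)})=\frac{1}{n}\sum_{k=0}^{n-1}\zeta^{-ki}R_{k}^{\otimes L}$ is correct, the $k=0$ term does produce the leading $\frac{1}{n}$, and your closing squeeze for the limit~\eqref{eq:pln} is sound once a bound of the form~\eqref{eq:qgel} is available. But the two places you defer as ``where the real work lies'' are genuine gaps, and the first one cannot be closed along the route you chose.

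The step $\sum_{i}\Tr[S_{i}M_{i}]\leqslant\frac{1}{2}\sum_{i}\|S_{i}\|_{1}$ uses only $\mathbb{O}\preceq M_{i}\preceq\mathbbm{1}$ and throws away the normalization $\sum_{i}M_{i}=\mathbbm{1}$. After that relaxation, the ``trace-norm averaging inequality'' you hope for does not exist: the bound $\frac{1}{2}\sum_{i}\|S_{i}\|_{1}\leqslant(n-1)\max_{k\neq0}\|R_{k}\|_{1}^{L}$ is violated by scalars satisfying every structural constraint you retain (Hermitian symmetry $R_{n-k}=R_{k}^{\dagger}$ and the norm bounds). Take $n=6$, $L=1$, $\zeta=e^{\pi\mathrm{i}/3}$, and $r_{k}=-(1+\zeta^{k})/|1+\zeta^{k}|$ for $k\in\{1,2,4,5\}$ with $r_{3}=0$; then $r_{6-k}=\overline{r_{k}}$ and $\max_{k}|r_{k}|=1$, while choosing the signs $\epsilon_{0}=\epsilon_{1}=-1$, $\epsilon_{2}=\cdots=\epsilon_{5}=+1$ gives $\sum_{i}|S_{i}|\geqslant\sum_{i}\epsilon_{i}S_{i}=\sum_{k\neq0}r_{k}\overline{\hat{\epsilon}(k)}=4+4\sqrt{3}$, where $\hat{\epsilon}(k)=\sum_{i}\epsilon_{i}\zeta^{ki}=-2(1+\zeta^{k})$. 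Hence $\frac{1}{2}\sum_{i}|S_{i}|\geqslant2+2\sqrt{3}>5=(n-1)\max_{k}|r_{k}|$, and it even exceeds $\sum_{k\neq0}|r_{k}|=4$. So the constant $\frac{n-1}{n}$ is unreachable from trace norms and Hermitian symmetry alone; any proof must keep the POVM structure your first inequality discarded. The standard fix is to group by $k$ rather than by $i$: write $\sum_{i}\Tr[S_{i}M_{i}]=\sum_{k\neq0}\Tr[R_{k}^{\otimes L}N_{k}]$ with $N_{k}=\sum_{i}\zeta^{-ki}M_{i}$; two applications of Cauchy--Schwarz show $\|N_{k}\|_{\infty}\leqslant1$ precisely because $\sum_{i}M_{i}=\mathbbm{1}$, and then H\"older plus multiplicativity of $\|\cdot\|_{1}$ under tensor powers yields $q_{\mathsf{X}}(\mathcal{E}^{(L)})\leqslant\frac{1}{n}+\frac{1}{n}\sum_{k\neq0}\|R_{k}\|_{1}^{L}$, which is exactly~\eqref{eq:qgel} once each $\|R_{k}\|_{1}$ is controlled.

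That control, $\|R_{k}\|_{1}\leqslant n\,q_{\mathsf{X}}(\mathcal{E})-1$ for $k\neq0$, is the second missing piece: you invoke it but never prove it, and the $L=1$ identity $n\,q_{\mathsf{X}}(\mathcal{E})-1=\max_{\mathcal{M}}\sum_{i}\Tr[S_{i}M_{i}]$ does not supply it, since that maximum is a POVM-constrained quantity, not a trace norm. The estimate is true, and the proof is a converse POVM construction: take a contraction $N$ with $\Tr[R_{k}N]=\|R_{k}\|_{1}$ (polar decomposition) and set $M_{j}=\frac{1}{n}\bigl(\mathbbm{1}+\tfrac{1}{2}(\zeta^{kj}N+\zeta^{-kj}N^{\dagger})\bigr)$ for $j=0,\ldots,n-1$; each $M_{j}\succeq0$ because $\|N\|_{\infty}\leqslant1$, and $\sum_{j}M_{j}=\mathbbm{1}$ because $\sum_{j}\zeta^{kj}=0$, so $\{M_{j}\}_{j}$ is a measurement whose value in the definition~\eqref{eq:qge} equals $\frac{1}{n}+\frac{1}{n}\|R_{k}\|_{1}\leqslant q_{\mathsf{X}}(\mathcal{E})$. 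Without this lemma, even your weaker fallback bound via the triangle inequality (constant $\frac{n(n-1)}{2}$ instead of $n-1$) is unsupported, so as written the proposal establishes neither~\eqref{eq:qgel} nor the limit~\eqref{eq:pln}.
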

\noindent For the case of $q_{\mathsf{X}}(\mathcal{E})<\frac{2}{n}$, we note that the right-hand side of Inequality~\eqref{eq:qgel} decreases to $\frac{1}{n}$ exponentially fast with respect to $L$.
Thus, the convergence \eqref{eq:pln} is also exponentially fast with respect to $L$.

From Proposition~\ref{prop:lqge} together with Inequalities~\eqref{eq:inq} and \eqref{eq:upb}, we have the following corollary.

\begin{corollary}\label{cor:pqn2}
For a two-state ensemble $\mathcal{E}=\{\eta_{i},\rho_{i}\}_{i=0}^{1}$ satisfying Condition \eqref{eq:trh} for a bipartition $\mathsf{X}\in\mathbb{X}_{2}$, we have
\begin{equation}\label{eq:pqn2}
p_{\mathsf{X}}(\mathcal{E}^{(L)})=q_{\mathsf{X}}(\mathcal{E}^{(L)})=\frac{1}{2}+\frac{1}{2}(2\eta_{0}-1)^{L}=\eta_{0}^{(L)}.
\end{equation}
\end{corollary}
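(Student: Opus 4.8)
The plan is to exploit the fact that for a two-state ensemble the general upper bound of Proposition~\ref{prop:lqge} becomes tight and coincides exactly with the explicit weight $\eta_{0}^{(L)}$, so that a chain of inequalities collapses to a single value. First I would record what the hypothesis gives for a single copy. Since $\mathcal{E}$ has exactly $n=2$ states and satisfies Condition~\eqref{eq:trh} for $\mathsf{X}$, Theorem~\ref{thm:qxe0} yields $p_{\mathsf{X}}(\mathcal{E})=q_{\mathsf{X}}(\mathcal{E})=\eta_{0}$. Taking the trace of the positive-semidefinite operator in \eqref{eq:trh} and using that partial transposition preserves the trace gives $\eta_{0}-\eta_{1}\geqslant0$, hence $\eta_{0}\geqslant\tfrac12$; this observation will be needed later to identify the larger of the two $L$-fold weights.

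Next I would feed $q_{\mathsf{X}}(\mathcal{E})=\eta_{0}$ into the unconditional bound~\eqref{eq:qgel} of Proposition~\ref{prop:lqge} with $n=2$. Since $\tfrac{n-1}{n}=\tfrac12$ and $n\cdot q_{\mathsf{X}}(\mathcal{E})-1=2\eta_{0}-1$, this reads
\begin{equation}
q_{\mathsf{X}}(\mathcal{E}^{(L)})\leqslant\frac{1}{2}+\frac{1}{2}(2\eta_{0}-1)^{L},
\end{equation}
which is the desired upper bound on the discrimination value of the $L$-fold ensemble.

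The key step is then to evaluate $\eta_{0}^{(L)}$ explicitly and see that it equals this bound. For $n=2$ the modulo-$2$ sum $\omega_{2}(\vec{c})$ is just the parity of the number of $1$'s in $\vec{c}$, and a string with $k$ ones contributes weight $\eta_{0}^{L-k}\eta_{1}^{k}$; summing over the $\binom{L}{k}$ such strings gives $\eta_{0}^{(L)}=\sum_{k\,\mathrm{even}}\binom{L}{k}\eta_{0}^{L-k}\eta_{1}^{k}$. The even-index binomial identity $\tfrac12[(\eta_{0}+\eta_{1})^{L}+(\eta_{0}-\eta_{1})^{L}]$ together with $\eta_{0}+\eta_{1}=1$ then yields $\eta_{0}^{(L)}=\tfrac12+\tfrac12(2\eta_{0}-1)^{L}$, matching the right-hand side above. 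Because $\eta_{0}\geqslant\tfrac12$ we also get $\eta_{0}^{(L)}\geqslant\tfrac12\geqslant\eta_{1}^{(L)}$, so $\max\{\eta_{0}^{(L)},\eta_{1}^{(L)}\}=\eta_{0}^{(L)}$.

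Finally I would close the loop with a lower bound. Applying Inequality~\eqref{eq:inq} to the ensemble $\mathcal{E}^{(L)}$ and then Inequality~\eqref{eq:upb} gives
\begin{equation}
\eta_{0}^{(L)}=\max\{\eta_{0}^{(L)},\eta_{1}^{(L)}\}\leqslant p_{\mathsf{X}}(\mathcal{E}^{(L)})\leqslant q_{\mathsf{X}}(\mathcal{E}^{(L)})\leqslant\frac{1}{2}+\frac{1}{2}(2\eta_{0}-1)^{L}=\eta_{0}^{(L)},
\end{equation}
so every inequality is forced to be an equality and \eqref{eq:pqn2} follows. I expect the only genuine work to be the binomial evaluation of $\eta_{0}^{(L)}$ and the recognition that it exactly reproduces the Proposition~\ref{prop:lqge} bound at $n=2$; everything else is assembling already-proved inequalities into a sandwich.
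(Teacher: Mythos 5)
Your proposal is correct and follows essentially the same route the paper intends: convert Condition~\eqref{eq:trh} into $p_{\mathsf{X}}(\mathcal{E})=q_{\mathsf{X}}(\mathcal{E})=\eta_{0}$ via Theorem~\ref{thm:qxe0}, apply the bound of Proposition~\ref{prop:lqge} with $n=2$, and sandwich with Inequalities~\eqref{eq:inq} and \eqref{eq:upb} after the binomial identity $\eta_{0}^{(L)}=\tfrac12+\tfrac12(2\eta_{0}-1)^{L}$. Your trace argument showing $\eta_{0}\geqslant\tfrac12$ is a sound (if strictly optional) addition, since $\eta_{0}^{(L)}\leqslant\max\{\eta_{0}^{(L)},\eta_{1}^{(L)}\}$ already suffices for the lower bound.
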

\noindent Moreover, the equality in Eq.~\eqref{eq:pxpx} together with Inequality~\eqref{eq:inq} and Corollary~\ref{cor:pqn2} leads us to the following corollary.

\begin{corollary}\label{cor:apqn2}
For a two-state ensemble $\mathcal{E}=\{\eta_{i},\rho_{i}\}_{i=0}^{1}$ satisfying Condition \eqref{eq:trh} for all bipartitions, we have
\begin{equation}\label{eq:apqn2}
p_{\mathsf{X}}(\mathcal{E}^{(L)})=\eta_{0}^{(L)}
\end{equation}
for all nontrivial partitions $\mathsf{X}\in\mathbb{X}_{1}^{\mathsf{c}}$.
\end{corollary}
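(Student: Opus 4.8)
The plan is to sandwich each $p_{\mathsf{X}}(\mathcal{E}^{(L)})$ for $\mathsf{X}\in\mathbb{X}_{1}^{\mathsf{c}}$ between a common upper bound and a common lower bound, both equal to $\eta_{0}^{(L)}$, thereby forcing equality. The two main ingredients are Corollary~\ref{cor:pqn2}, which pins down the bipartite case, and the structural equality in Eq.~\eqref{eq:pxpx}, which reduces all nontrivial partitions to bipartitions. Since Eq.~\eqref{eq:pxpx} and Inequality~\eqref{eq:inq} are derived purely from the inclusion relations among the measurement classes $\mathbb{M}_{\mathsf{X}}$ and hence hold for any ensemble, I would apply them directly to the $L$-fold ensemble $\mathcal{E}^{(L)}$ in place of $\mathcal{E}$.

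For the upper bound, I would first note that the hypothesis guarantees Condition~\eqref{eq:trh} for every bipartition $\mathsf{X}\in\mathbb{X}_{2}$, so Corollary~\ref{cor:pqn2} yields $p_{\mathsf{X}}(\mathcal{E}^{(L)})=\eta_{0}^{(L)}$ for each such $\mathsf{X}$. Taking the maximum over $\mathbb{X}_{2}$ then gives $\max_{\mathsf{X}\in\mathbb{X}_{2}}p_{\mathsf{X}}(\mathcal{E}^{(L)})=\eta_{0}^{(L)}$, and the equality in Eq.~\eqref{eq:pxpx}, applied to $\mathcal{E}^{(L)}$, promotes this to $\max_{\mathsf{X}\in\mathbb{X}_{1}^{\mathsf{c}}}p_{\mathsf{X}}(\mathcal{E}^{(L)})=\eta_{0}^{(L)}$. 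In particular $p_{\mathsf{X}}(\mathcal{E}^{(L)})\leqslant\eta_{0}^{(L)}$ for every $\mathsf{X}\in\mathbb{X}_{1}^{\mathsf{c}}$.

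For the matching lower bound, I would apply Inequality~\eqref{eq:inq} to $\mathcal{E}^{(L)}$, which gives $\max\{\eta_{0}^{(L)},\eta_{1}^{(L)}\}\leqslant p_{\mathsf{X}}(\mathcal{E}^{(L)})$ for all $\mathsf{X}$. The only point left to check is that $\eta_{0}^{(L)}$ is the larger of the two weights, i.e. $\eta_{0}^{(L)}\geqslant\eta_{1}^{(L)}$. This follows once $\eta_{0}\geqslant\eta_{1}$: taking the trace of Condition~\eqref{eq:trh} for any fixed bipartition, and using that partial transposition preserves the trace together with $\Tr\rho_{0}=\Tr\rho_{1}=1$, gives $\eta_{0}-\eta_{1}\geqslant0$ and hence $2\eta_{0}-1\geqslant0$. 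Combining $\eta_{0}^{(L)}+\eta_{1}^{(L)}=1$ with the explicit value in Eq.~\eqref{eq:pqn2} then yields $\eta_{0}^{(L)}-\eta_{1}^{(L)}=(2\eta_{0}-1)^{L}\geqslant0$. Thus $\eta_{0}^{(L)}\leqslant p_{\mathsf{X}}(\mathcal{E}^{(L)})$ for every $\mathsf{X}$, and combining with the upper bound of the previous step squeezes $p_{\mathsf{X}}(\mathcal{E}^{(L)})=\eta_{0}^{(L)}$ for each $\mathsf{X}\in\mathbb{X}_{1}^{\mathsf{c}}$.

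The conceptual work is almost entirely front-loaded into Corollary~\ref{cor:pqn2} and Eq.~\eqref{eq:pxpx}, so I do not anticipate a serious obstacle; the one place requiring care is the passage from the statement about the maximum over $\mathbb{X}_{1}^{\mathsf{c}}$ to the per-partition equalities, which is precisely where the lower bound from Inequality~\eqref{eq:inq} becomes indispensable. Establishing $\eta_{0}\geqslant\eta_{1}$ (equivalently, that $\eta_{0}^{(L)}$ is the dominant weight) is the only genuinely new computation, and it is a one-line trace argument rather than a real difficulty.
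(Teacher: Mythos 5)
Your proof is correct and takes essentially the same route as the paper, which obtains this corollary precisely by combining Corollary~\ref{cor:pqn2}, the equality in Eq.~\eqref{eq:pxpx} applied to $\mathcal{E}^{(L)}$, and Inequality~\eqref{eq:inq} in the squeeze you describe. The only difference is that your trace argument for $\eta_{0}\geqslant\eta_{1}$ is superfluous: Inequality~\eqref{eq:inq} already gives $p_{\mathsf{X}}(\mathcal{E}^{(L)})\geqslant\max\{\eta_{0}^{(L)},\eta_{1}^{(L)}\}\geqslant\eta_{0}^{(L)}$ no matter which weight dominates, so the squeeze closes without identifying the larger one.
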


In information theory, data hiding is the protection of data by sharing it to multiple players so that the data only become accessible as a consequence of their cooperative action. 
By using particular two-party quantum states, there have been several quantum data-hiding schemes proposed to share the data between two players, Alice and Bob\cite{terh20011,divi2002,egge2002,mori2013,pian2014,peng2021,lami2021}.

Here, we establish a relation between a multi-party quantum state ensemble and its role in quantum data hiding. 
For a given multi-party quantum state ensemble $\mathcal{E}$, we first provide a condition in terms of $p_{\mathsf{G}}(\mathcal{E})$ and $q_{\mathsf{X}}(\mathcal{E})$ with respect to bipartitions $\mathsf{X}\in\mathbb{X}_{2}$. 
We show that $\mathcal{E}$ can be used to construct a multi-player quantum data-hiding scheme if $\mathcal{E}$ satisfies the condition.
The data are globally accessible, and non-global measurements can only provide arbitrary little information about the data. In other words, the data are perfectly hidden asymptotically.

For any integer $n\geqslant 2$, let us consider a multi-party quantum state ensemble $\mathcal{E}=\{\eta_{i},\rho_{i}\}_{i=0}^{n-1}$ satisfying the following condition
\begin{equation}\label{eq:pqgc}
p_{\mathsf{G}}(\mathcal{E})=1,~~
\max_{\mathsf{X}\in\mathbb{X}_{2}}q_{\mathsf{X}}(\mathcal{E})<\frac{2}{n}.
\end{equation}
By using the quantum state ensemble $\mathcal{E}$, we construct a quantum data-hiding scheme that can hide an $n$-ary classical data, that is, one in $\{0,\ldots,n-1\}$.

Hider, the extra party, first chooses a multi-party quantum state $\rho_{c_{1}}$ from $\mathcal{E}$ with the corresponding probability $\eta_{c_{1}}$ for $c_{1}\in\mathbb{Z}_{n}$ and shares it to the $m$ players $A_{1},\ldots,A_{m}$. 
The hider repeats this process $L$ times, so that $\rho_{c_{l}}$ is chosen with probability $\eta_{c_{l}}$ in the $l$th repetition for $l=1,\ldots,L$.
This whole procedure is equivalent to the situation that the hider first prepares the state 
\begin{equation}\label{eq:psm}
\rho_{\vec{c}}=\rho_{c_{1}}\otimes\cdots\otimes\rho_{c_{L}}
\end{equation}
from the $L$-fold ensemble $\mathcal{E}^{\otimes L}$ in Eq.~\eqref{eq:lfe}
with the corresponding probability $\eta_{\vec{c}}$ for $\vec{c}=(c_{1},\ldots,c_{L})\in\mathbb{Z}_{n}^{L}$, and sends it to the $m$ players $A_{1},\ldots,A_{m}$. 
This step is illustrated in Fig.~\ref{fig:qdh1}.

\begin{figure}[!tt]
\centerline{\includegraphics{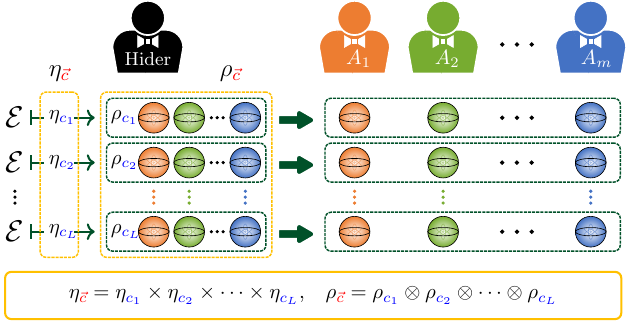}}
\caption{Data-hiding scheme based on a multi-party orthogonal quantum state ensemble  $\mathcal{E}=\{\eta_{i},\rho_{i}\}_{i=0}^{n-1}$ with $\max_{\mathsf{X}\in\mathbb{X}_{2}}q_{\mathsf{X}}(\mathcal{E})<\frac{2}{n}$ (Step~1). Hider first prepares the state $\rho_{\vec{c}}$ with the corresponding probability $\eta_{\vec{c}}$ from the $L$-fold ensemble of $\mathcal{E}$, and then sends $\rho_{\vec{c}}$ to the $m$ players $A_{1},\ldots,A_{m}$. 
}\label{fig:qdh1}
\end{figure}

To hide the classical data $x\in\mathbb{Z}_{n}$, the hider broadcasts the classical information $z$ to the $m$ players $A_{1},\ldots,A_{m}$,
\begin{equation}\label{eq:xyz}
z=x\oplus y
\end{equation}
where $\oplus$ is the modulo-$n$ addition and
\begin{equation}\label{eq:yonc}
y=\omega_{n}(\vec{c})
\end{equation}
for the prepared state $\rho_{\vec{c}}$ in Eq.~\eqref{eq:psm}.
We illustrate this step in Fig.~\ref{fig:qdh2}. 

\begin{figure}[!tt]
\centerline{\includegraphics{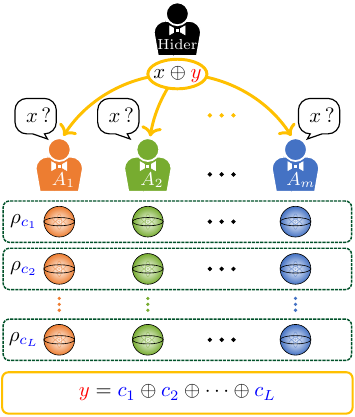}}
\caption{Data-hiding scheme based on a multi-party orthogonal quantum state ensemble $\mathcal{E}=\{\eta_{i},\rho_{i}\}_{i=0}^{n-1}$ with $\max_{\mathsf{X}\in\mathbb{X}_{2}}q_{\mathsf{X}}(\mathcal{E})<\frac{2}{n}$ (Step~2).
To hide the classical data $x\in\{0,\ldots,n-1\}$, hider broadcasts $x\oplus y$ to the $m$ players $A_{1},\ldots,A_{m}$, where $y$ is the modulo-$n$ summation of all entries in $\vec{c}$
for the prepared state $\rho_{\vec{c}}$.}\label{fig:qdh2}
\end{figure}

As the information $z$ in Eq.~\eqref{eq:xyz} was broadcasted to the $m$ players $A_{1},\ldots,A_{m}$, guessing the data $x$ is equivalent to guessing the data $y$. 
Moreover, guessing the data $y$ in Eq.~\eqref{eq:yonc} is equivalent to discriminating the states in Eq.~\eqref{eq:rhil} due to the argument in the paragraph containing Eqs.~\eqref{eq:mns} and \eqref{eq:rhil}.
For each partition $\mathsf{X}\in\mathbb{X}$, the maximum average probability of correctly guessing the data $x$ only using $\mathsf{X}$-LOCC measurements is equal to the maximum average probability of discriminating the states from the ensemble $\mathcal{E}^{(L)}$ in Eq.~\eqref{eq:eld} only using $\mathsf{X}$-LOCC measurements, that is, $p_{\mathsf{X}}(\mathcal{E}^{(L)})$.

The condition $p_{\mathsf{G}}(\mathcal{E})=1$ in Eq.~\eqref{eq:pqgc} means the states $\rho_{i}$'s of $\mathcal{E}$ are mutually orthogonal. 
In this case, the states $\rho_{\vec{c}}$'s of $\mathcal{E}^{\otimes L}$ in Eq.~\eqref{eq:erlf} are mutually orthogonal, which also implies the mutual orthogonality of $\rho_{i}^{(L)}$'s of $\mathcal{E}^{(L)}$ in Eq.~\eqref{eq:rhil}.
Thus, we have
\begin{equation}\label{eq:dpg1}
p_{\mathsf{G}}(\mathcal{E}^{(L)})=1.
\end{equation}
In other words, the data $x$ are accessible when the $m$ players $A_{1},\ldots,A_{m}$ can collaborate to use global measurements.

On the other hand, the condition $\max_{\mathsf{X}\in\mathbb{X}_{2}}q_{\mathsf{X}}(\mathcal{E})<\frac{2}{n}$ in Eq.~\eqref{eq:pqgc} together with Proposition~\ref{prop:lqge} and the equality in~\eqref{eq:pxpx} implies that $p_{\mathsf{X}}(\mathcal{E}^{(L)})$ can be arbitrarily close to $\frac{1}{n}$ by choosing an appropriately large $L$. 
In other words, Eq.~\eqref{eq:pln} holds for all nontrivial partitions $\mathsf{X}\in\mathbb{X}_{1}^{\mathsf{c}}$.

The following theorem states that the globally accessible data $x$ are perfectly hidden asymptotically if the $m$ players $A_{1},\ldots,A_{m}$ are only able to use non-global measurements.
\begin{theorem}\label{thm:ret}
Every multi-party quantum state ensemble $\mathcal{E}=\{\eta_{i},\rho_{i}\}_{i=0}^{n-1}$ with Condition \eqref{eq:pqgc} can be used to construct a multi-player data-hiding scheme that hides an $n$-ary classical data.
\end{theorem}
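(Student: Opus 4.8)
The plan is to verify that the scheme described above has the two defining features of quantum data hiding: the hidden value $x$ is perfectly recoverable once all $m$ players cooperate (global accessibility), while any coalition restricted to non-global measurements can asymptotically do no better than random guessing (hiding). Since the preceding paragraphs already reduced ``guessing $x$ by an $\mathsf{X}$-LOCC strategy'' to discriminating the ensemble $\mathcal{E}^{(L)}$ with $\mathsf{X}$-LOCC measurements, i.e.\ to the quantity $p_{\mathsf{X}}(\mathcal{E}^{(L)})$, the proof reduces to assembling the bounds on $p_{\mathsf{X}}(\mathcal{E}^{(L)})$ that Condition~\eqref{eq:pqgc} forces through the results of Sect.~\ref{sec:pre}.

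First I would settle global accessibility. The hypothesis $p_{\mathsf{G}}(\mathcal{E})=1$ makes the $\rho_i$ mutually orthogonal, hence the $\rho_{\vec c}$ and therefore the $\rho_i^{(L)}$ are mutually orthogonal as well, giving $p_{\mathsf{G}}(\mathcal{E}^{(L)})=1$ as in Eq.~\eqref{eq:dpg1}. A global measurement thus identifies $y=\omega_n(\vec c)$ with certainty; knowing the broadcast value $z=x\oplus y$, the cooperating players recover $x$ exactly.

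Next I would establish the hiding property, namely $p_{\mathsf{X}}(\mathcal{E}^{(L)})\to 1/n$ as $L\to\infty$ for every nontrivial partition $\mathsf{X}\in\mathbb{X}_{1}^{\mathsf{c}}$. The lower bound is immediate from Inequality~\eqref{eq:inq} applied to $\mathcal{E}^{(L)}$: since the $n$ probabilities $\eta_i^{(L)}$ sum to one, their maximum is at least $1/n$, whence $p_{\mathsf{X}}(\mathcal{E}^{(L)})\geqslant 1/n$. For the matching upper bound I would invoke the equality in Eq.~\eqref{eq:pxpx}, which replaces the supremum over all non-global strategies by the supremum over the \emph{finitely many} bipartitions in $\mathbb{X}_{2}$. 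For each such bipartition the hypothesis $q_{\mathsf{X}}(\mathcal{E})<2/n$ lets Proposition~\ref{prop:lqge} apply, yielding $p_{\mathsf{X}}(\mathcal{E}^{(L)})\to 1/n$ and in fact exponentially fast decay of the bound~\eqref{eq:qgel}. A maximum of finitely many sequences each converging to $1/n$ still converges to $1/n$, so $\max_{\mathsf{X}\in\mathbb{X}_{1}^{\mathsf{c}}}p_{\mathsf{X}}(\mathcal{E}^{(L)})\to 1/n$; combined with the lower bound this squeezes each individual $p_{\mathsf{X}}(\mathcal{E}^{(L)})$ to $1/n$.

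The only genuine subtlety---and the step I expect to be the main obstacle---is the uniformity over partitions: Proposition~\ref{prop:lqge} delivers the limit only for one fixed bipartition, so one must guarantee that convergence survives the maximization over the whole family of non-global measurements. This is exactly what the equality in Eq.~\eqref{eq:pxpx} buys, collapsing that family onto the finite index set $\mathbb{X}_{2}$, after which the finiteness argument closes the gap. Having verified both accessibility and asymptotic hiding, I would conclude that $\mathcal{E}$ furnishes the claimed multi-player data-hiding scheme for $n$-ary data.
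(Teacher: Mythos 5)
Your proposal is correct and follows essentially the same route as the paper: reduce guessing $x$ to discriminating $\mathcal{E}^{(L)}$, get global accessibility from the orthogonality implied by $p_{\mathsf{G}}(\mathcal{E})=1$ (Eq.~\eqref{eq:dpg1}), and get asymptotic hiding by combining the equality in Eq.~\eqref{eq:pxpx} with Proposition~\ref{prop:lqge} applied to each bipartition. Your explicit remarks on the lower bound from Inequality~\eqref{eq:inq} and on the finiteness of $\mathbb{X}_{2}$ making the maximization harmless are details the paper leaves implicit, but they do not change the argument.
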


For a multi-party quantum state ensemble $\mathcal{E}=\{\eta_{i},\rho_{i}\}_{i=0}^{n-1}$, 
let us suppose that
$\eta_{0}<\frac{2}{n}$ and Condition \eqref{eq:trh} holds for all $\mathsf{X}\in\mathbb{X}_{2}$.
In this case, Theorem~\ref{thm:qxe0} implies that the condition $\max_{\mathsf{X}\in\mathbb{X}_{2}}q_{\mathsf{X}}(\mathcal{E})<\frac{2}{n}$ in Eq.~\eqref{eq:pqgc} holds; hence, we have the following corollary.

\begin{corollary}\label{cor:rer}
Every multi-party orthogonal quantum state ensemble $\mathcal{E}=\{\eta_{i},\rho_{i}\}_{i=0}^{n-1}$ satisfying $\eta_{0}<\frac{2}{n}$ and Condition \eqref{eq:trh} for all $\mathsf{X}\in\mathbb{X}_{2}$ can be used to construct a multi-player data-hiding scheme that hides an $n$-ary classical data. 
\end{corollary}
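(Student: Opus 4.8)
The plan is to recognize Corollary~\ref{cor:rer} as an immediate specialization of Theorem~\ref{thm:ret}: I would simply verify that the stated hypotheses force Condition~\eqref{eq:pqgc}, after which the desired data-hiding scheme is produced by Theorem~\ref{thm:ret} verbatim. Thus the whole task reduces to establishing the two clauses $p_{\mathsf{G}}(\mathcal{E})=1$ and $\max_{\mathsf{X}\in\mathbb{X}_{2}}q_{\mathsf{X}}(\mathcal{E})<\frac{2}{n}$ separately.

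The first clause is handled by the orthogonality assumption alone. Since $\mathcal{E}$ is an orthogonal ensemble, the states $\rho_{0},\ldots,\rho_{n-1}$ are mutually orthogonal and can therefore be perfectly discriminated by a single global measurement, namely the one built from the orthogonal projections onto their supports (completed to a resolution of the identity). This yields success probability $1$, so $p_{\mathsf{G}}(\mathcal{E})=1$, exactly the first requirement of Condition~\eqref{eq:pqgc}. The substantive step is the second clause, and here I would invoke Theorem~\ref{thm:qxe0}. The standing assumption that Condition~\eqref{eq:trh} holds for every bipartition $\mathsf{X}\in\mathbb{X}_{2}$ is precisely the hypothesis of that theorem, so it delivers $q_{\mathsf{X}}(\mathcal{E})=\eta_{0}$ for each such $\mathsf{X}$. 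Combining this with $\eta_{0}<\frac{2}{n}$ gives $\max_{\mathsf{X}\in\mathbb{X}_{2}}q_{\mathsf{X}}(\mathcal{E})=\eta_{0}<\frac{2}{n}$, which is the second requirement of Condition~\eqref{eq:pqgc}.

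With both parts of Condition~\eqref{eq:pqgc} in hand, Theorem~\ref{thm:ret} applies directly and yields the multi-player data-hiding scheme hiding $n$-ary classical data. I do not anticipate any genuine obstacle here, since the only nontrivial input is the equivalence furnished by Theorem~\ref{thm:qxe0} and the rest is bookkeeping (orthogonality giving perfect global discrimination, and a maximum of constant values). The one point I would be careful about is that Theorem~\ref{thm:qxe0} is applied with the same distinguished index $0$ across all bipartitions; this is legitimate because Condition~\eqref{eq:trh} is assumed uniformly in $\mathsf{X}$, so the value $q_{\mathsf{X}}(\mathcal{E})=\eta_{0}$ is indeed independent of $\mathsf{X}$ and the maximum over $\mathbb{X}_{2}$ collapses to $\eta_{0}$.
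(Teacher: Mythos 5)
Your proposal is correct and follows essentially the same route as the paper: the paper likewise derives $\max_{\mathsf{X}\in\mathbb{X}_{2}}q_{\mathsf{X}}(\mathcal{E})=\eta_{0}<\frac{2}{n}$ from Theorem~\ref{thm:qxe0} applied to Condition~\eqref{eq:trh} for every bipartition, uses orthogonality for $p_{\mathsf{G}}(\mathcal{E})=1$, and then invokes Theorem~\ref{thm:ret}. Your added remark that the distinguished index $0$ is uniform across all bipartitions is a fair point of care, and it matches the paper's post-proof comment on Theorem~\ref{thm:qxe0}.
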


We also remark that our results can be used to hide the classical data by directly encoding it into a multi-party quantum state.
For any integers $n\geqslant2$, $L\geqslant1$ and a multi-party quantum state ensemble
$\mathcal{E}=\{\eta_{i},\rho_{i}\}_{i=0}^{n-1}$ satisfying Condition~\eqref{eq:pqgc}, let us consider the $n$ quantum states $\rho_{0}^{(L)},\ldots,\rho_{n-1}^{(L)}$ in Eq.~\eqref{eq:rhil}.
For a bipartition $\mathsf{X}\in\mathbb{X}_{2}$, the following proposition provides the convergence of $p_{\mathsf{X}}(\{\frac{1}{n},\rho_{i}^{(L)}\}_{i=0}^{n-1})$ as $L\rightarrow\infty$\cite{ha2024}.

\begin{proposition}\label{prop:povn}
For a bipartition $\mathsf{X}\in\mathbb{X}_{2}$ and a multi-party quantum state ensemble $\mathcal{E}=\{\eta_{i},\rho_{i}\}_{i=0}^{n-1}$ with $q_{\mathsf{X}}(\mathcal{E})<\frac{2}{n}$, we have
\begin{equation}\label{eq:plon}
\lim_{L\rightarrow\infty}p_{\mathsf{X}}(\{\tfrac{1}{n},\rho_{i}^{(L)}\}_{i=0}^{n-1})=\frac{1}{n}.
\end{equation}
\end{proposition}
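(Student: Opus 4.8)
The plan is to establish \eqref{eq:plon} by sandwiching $p_{\mathsf{X}}(\{\tfrac{1}{n},\rho_{i}^{(L)}\}_{i=0}^{n-1})$ between $\tfrac{1}{n}$ and a quantity that tends to $\tfrac{1}{n}$. The lower bound is immediate: applying Inequality~\eqref{eq:inq} to the uniform ensemble $\{\tfrac{1}{n},\rho_{i}^{(L)}\}_{i=0}^{n-1}$ gives $p_{\mathsf{X}}(\{\tfrac{1}{n},\rho_{i}^{(L)}\}_{i=0}^{n-1})\geqslant\tfrac{1}{n}$ for every $L$. All the work is therefore in the matching upper bound, which I would route through the already-controlled quantity $p_{\mathsf{X}}(\mathcal{E}^{(L)})$ of Proposition~\ref{prop:lqge} rather than through $q_{\mathsf{X}}$ of the uniform ensemble.

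The crucial step is a change-of-prior estimate. Fix any $\mathsf{X}$-LOCC measurement $\mathcal{M}=\{M_{i}\}_{i=0}^{n-1}\in\mathbb{M}_{\mathsf{X}}$. Because each $\rho_{i}^{(L)}$ is a genuine density operator and each $M_{i}\succeq0$, every term $\Tr(\rho_{i}^{(L)}M_{i})$ is nonnegative; this positivity is exactly what makes working with $p_{\mathsf{X}}$ directly (and not with the partial-transpose-based $q_{\mathsf{X}}$, whose ``states'' $\Gamma_{\mathsf{X}}(\rho_{i}^{(L)})$ are not positive) the right move. Writing $\tfrac{1}{n}=\tfrac{1/n}{\eta_{i}^{(L)}}\,\eta_{i}^{(L)}$ and bounding each weight by its maximum, I obtain
\begin{equation}
\frac{1}{n}\sum_{i=0}^{n-1}\Tr(\rho_{i}^{(L)}M_{i})\leqslant\frac{1}{n\min_{j}\eta_{j}^{(L)}}\sum_{i=0}^{n-1}\eta_{i}^{(L)}\Tr(\rho_{i}^{(L)}M_{i})\leqslant\frac{1}{n\min_{j}\eta_{j}^{(L)}}\,p_{\mathsf{X}}(\mathcal{E}^{(L)}),
\end{equation}
where the last inequality uses $\mathcal{M}\in\mathbb{M}_{\mathsf{X}}$ and the definition of $p_{\mathsf{X}}(\mathcal{E}^{(L)})$. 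Taking the maximum over $\mathcal{M}\in\mathbb{M}_{\mathsf{X}}$ yields $p_{\mathsf{X}}(\{\tfrac{1}{n},\rho_{i}^{(L)}\}_{i=0}^{n-1})\leqslant(n\min_{j}\eta_{j}^{(L)})^{-1}p_{\mathsf{X}}(\mathcal{E}^{(L)})$.

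Since the hypothesis $q_{\mathsf{X}}(\mathcal{E})<\tfrac{2}{n}$ is in force, Proposition~\ref{prop:lqge}, in particular \eqref{eq:pln}, gives $p_{\mathsf{X}}(\mathcal{E}^{(L)})\to\tfrac{1}{n}$. It thus remains to show that the prefactor $(n\min_{j}\eta_{j}^{(L)})^{-1}$ tends to $1$, i.e. that $\min_{j}\eta_{j}^{(L)}\to\tfrac{1}{n}$. I would prove $\eta_{j}^{(L)}\to\tfrac{1}{n}$ for every $j$ by a character (discrete Fourier) computation on $\mathbb{Z}_{n}$: with $\zeta=e^{2\pi i/n}$ and $\hat{\eta}(k)=\sum_{c=0}^{n-1}\eta_{c}\zeta^{kc}$, the multiplicativity of $\eta_{\vec{c}}=\prod_{l}\eta_{c_{l}}$ under the modulo-$n$ sum $\omega_{n}$ gives $\eta_{j}^{(L)}=\tfrac{1}{n}\sum_{k=0}^{n-1}\zeta^{-kj}\hat{\eta}(k)^{L}$; since $\hat{\eta}(0)=1$ while $|\hat{\eta}(k)|<1$ for $k\neq0$, every non-constant term decays geometrically and $\eta_{j}^{(L)}\to\tfrac{1}{n}$. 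Combining, $\limsup_{L}p_{\mathsf{X}}(\{\tfrac{1}{n},\rho_{i}^{(L)}\}_{i=0}^{n-1})\leqslant\tfrac{1}{n}$, which with the lower bound gives \eqref{eq:plon}.

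The main obstacle I anticipate is precisely this last equidistribution step and its hypotheses. The bound $|\hat{\eta}(k)|<1$ for all $k\neq0$ --- equivalently, aperiodicity of the induced random walk on $\mathbb{Z}_{n}$, guaranteed for instance whenever $\eta_{c}>0$ for all $c$ --- is what makes the prefactor collapse to $1$; without it $\min_{j}\eta_{j}^{(L)}$ need not approach $\tfrac{1}{n}$. One must also note that the ensemble $\{\tfrac{1}{n},\rho_{i}^{(L)}\}_{i=0}^{n-1}$ is only defined once every $\eta_{j}^{(L)}>0$, so the statement is implicitly restricted to this regime; fortunately the same Fourier identity shows $\eta_{j}^{(L)}>0$ for all large $L$ in the aperiodic case, so the restriction is harmless asymptotically.
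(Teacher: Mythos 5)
A preliminary remark: the paper never proves Proposition~\ref{prop:povn} itself --- it is imported from Ref.~\cite{ha2024} without proof --- so your proposal can only be judged against the statement and the paper's surrounding machinery, not against an in-paper argument.

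Your architecture is sound: the lower bound from Inequality~\eqref{eq:inq}; the change-of-prior estimate
$p_{\mathsf{X}}(\{\tfrac{1}{n},\rho_{i}^{(L)}\}_{i=0}^{n-1})\leqslant\bigl(n\min_{j}\eta_{j}^{(L)}\bigr)^{-1}p_{\mathsf{X}}(\mathcal{E}^{(L)})$, which is valid precisely because each $\Tr(\rho_{i}^{(L)}M_{i})\geqslant0$; the appeal to Eq.~\eqref{eq:pln} of Proposition~\ref{prop:lqge}; and the Fourier inversion $\eta_{j}^{(L)}=\tfrac{1}{n}\sum_{k=0}^{n-1}\zeta^{-kj}\hat{\eta}(k)^{L}$ are all correct. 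The genuine gap is where you stop. You need $|\hat{\eta}(k)|<1$ for all $k\neq0$ (aperiodicity), you note it holds ``for instance whenever $\eta_{c}>0$ for all $c$'', and you then declare the proposition ``implicitly restricted'' to that regime. That last move is wrong: the proposition carries no such implicit restriction, and as written your argument only proves a weaker statement with an added hypothesis.

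The missing observation is that aperiodicity is already forced by the stated hypothesis $q_{\mathsf{X}}(\mathcal{E})<\tfrac{2}{n}$, using only inequalities the paper provides. Indeed, Inequalities~\eqref{eq:inq} and \eqref{eq:upb} give $\max_{i}\eta_{i}\leqslant p_{\mathsf{X}}(\mathcal{E})\leqslant q_{\mathsf{X}}(\mathcal{E})<\tfrac{2}{n}$. Now suppose $|\hat{\eta}(k)|=1$ for some $k\neq0$. Equality in the triangle inequality forces all phases $\zeta^{kc}$ with $\eta_{c}>0$ to coincide, i.e. the support $S=\{c:\eta_{c}>0\}$ lies in a single coset of the subgroup $H_{k}=\{x\in\mathbb{Z}_{n}:kx\equiv0~(\mathrm{mod}~n)\}$. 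Since $|H_{k}|=\gcd(k,n)$ is a proper divisor of $n$, we have $|S|\leqslant|H_{k}|\leqslant n/2$, whence $\max_{i}\eta_{i}\geqslant1/|S|\geqslant2/n$, contradicting the bound above. Hence $|\hat{\eta}(k)|<1$ for every $k\neq0$, so $\min_{j}\eta_{j}^{(L)}\rightarrow\tfrac{1}{n}$ unconditionally under the proposition's hypothesis (and in particular $\eta_{j}^{(L)}>0$ for all sufficiently large $L$, so the ensemble $\{\tfrac{1}{n},\rho_{i}^{(L)}\}_{i=0}^{n-1}$ is eventually well defined, disposing of your other caveat as well). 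With this one observation inserted, your proof is complete and establishes the proposition in full generality.
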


To hide an $n$-ary classical data $x\in\mathbb{Z}_{n}$, hider sends the quantum state $\rho_{x}^{(L)}$ to the $m$ players $A_{1},\ldots,A_{m}$. 
This situation is illustrated in Fig.~\ref{fig:qdh3}, which is equivalent to discriminating the $n$ quantum states $\rho_{0}^{(L)},\ldots,\rho_{n-1}^{(L)}$ with the same probability $\frac{1}{n}$.
From the argument before Eq.~\eqref{eq:dpg1}, the condition $p_{\mathsf{G}}(\mathcal{E})=1$ in Eq.~\eqref{eq:pqgc} means that the states $\rho_{0}^{(L)},\ldots,\rho_{n-1}^{(L)}$ are mutually orthogonal.
Thus, we have 
\begin{equation}\label{eq:lrg}
p_{\mathsf{G}}(\{\tfrac{1}{n},\rho_{i}^{(L)}\}_{i=0}^{n-1})=1.
\end{equation}
In other words, the data $x$ are accessible when the $m$ players $A_{1},\ldots,A_{m}$ can collaborate to use global measurements.

\begin{figure}[!tt]
\centerline{\includegraphics{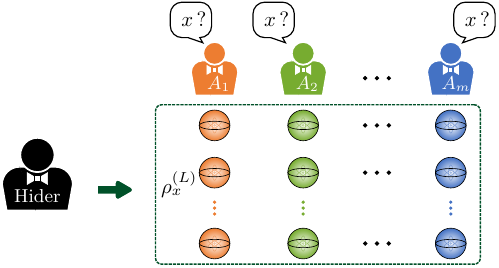}}
\caption{Data-hiding scheme using $n$ quantum states $\rho_{0}^{(L)},\ldots,\rho_{n-1}^{(L)}$ defined from a $m$-party orthogonal quantum state ensemble $\mathcal{E}=\{\eta_{i},\rho_{i}\}_{i=0}^{n-1}$ with $\max_{\mathsf{X}\in\mathbb{X}_{2}}q_{\mathsf{X}}(\mathcal{E})<\frac{2}{n}$.
To hide the $n$-ary classical data $x\in\{0,\ldots,n-1\}$, hider sends the quantum state $\rho_{x}^{(L)}$ to the $m$ players $A_{1},\ldots,A_{m}$.}\label{fig:qdh3}
\end{figure}

On the other hand, from Proposition~\ref{prop:povn} together with the equality in Eq.~\eqref{eq:pxpx} and Inequality~\eqref{eq:inq}, the condition $\max_{\mathsf{X}\in\mathbb{X}_{2}}q_{\mathsf{X}}(\mathcal{E})<\frac{2}{n}$ in Eq.~\eqref{eq:pqgc} implies  that 
Eq.~\eqref{eq:plon} holds for all nontrivial partitions $\mathsf{X}\in\mathbb{X}_{1}^{\mathsf{c}}$.
In other words, the globally accessible data $x$ are perfectly hidden asymptotically if the $m$ players $A_{1},\ldots,A_{m}$ can only perform non-global measurements. Thus, the $n$ quantum states $\rho_{0}^{(L)},\ldots,\rho_{n-1}^{(L)}$ can be used to construct a multi-player data-hiding scheme that hide the $n$-ary classical data by directly encoding them into the states.

\section{Examples}\label{sec:ex}
In this section, we provide examples of multi-party orthogonal quantum state ensembles to illustrate our results. 
First, we present an example of multi-party ensemble consisting of two quantum states to hide classical single-bit information.
\begin{example}\label{ex:npt}
For integers $d\geqslant2$ and $m\geqslant2$, let us consider the two orthogonal $m$-qu$d$it state ensemble $\mathcal{F}=\{\gamma_{i},\tau_{i}\}_{i=0}^{1}$ consisting of 
\begin{equation}\label{eq:ex1}
\gamma_{0}=\frac{d^{m}-1}{d^{m}},~
\tau_{0}=\frac{\mathbbm{1}-\ketbra{\mathrm{GHZ}}{\mathrm{GHZ}}}{d^{m}-1},~
\gamma_{1}=\frac{1}{d^{m}},~
\tau_{1}=\ketbra{\mathrm{GHZ}}{\mathrm{GHZ}}
\end{equation}
where
\begin{equation}\label{eq:ghzs}
\ket{\mathrm{GHZ}}=\frac{1}{\sqrt{d}}\sum_{i=0}^{d-1}\ket{i}_{A_{1}}\otimes\cdots\otimes\ket{i}_{A_{m}}
\end{equation}
is the $m$-qu$d$it Greenberger–Horne–Zeilinger state\cite{gree1989}.
\end{example}
For a bipartition $\mathsf{X}=\{\mathbf{A},\mathbf{B}\}\in\mathbb{X}_{2}$, it is straightforward to verify
\begin{eqnarray}\label{eq:exq1}
\gamma_{0}\Gamma_{\mathsf{X}}(\tau_{0})-\gamma_{1}\Gamma_{\mathsf{X}}(\tau_{1})
&=&\frac{1}{d^{m}}\Gamma_{\mathsf{X}}(\mathbbm{1}-2\ketbra{\mathrm{GHZ}}{\mathrm{GHZ}})\nonumber\\
&=&\frac{1}{d^{m}}\Big(\mathbbm{1}-\frac{2}{d}\sum_{i,j=0}^{d-1}\ket{i\cdots i}_{\mathbf{A}}\!\bra{j\cdots j}\otimes\ket{j\cdots j}_{\mathbf{B}}\!\bra{i\cdots i}\Big)
\end{eqnarray}
where
\begin{equation}\label{eq:iix}
\ket{i\cdots i}_{\mathbf{A}}=\bigotimes_{A\in \mathbf{A}}\ket{i}_{A},~
\ket{i\cdots i}_{\mathbf{B}}=\bigotimes_{B\in \mathbf{B}}\ket{i}_{B}.
\end{equation}

The summation term in Eq.~\eqref{eq:exq1} can be rewritten as
\begin{eqnarray}\label{eq:egrm}
\sum_{i,j=0}^{d-1}\ket{i\cdots i}_{\mathbf{A}}\!\bra{j\cdots j}\otimes\ket{j\cdots j}_{\mathbf{B}}\!\bra{i\cdots i}
=J+\sum_{i<j}\big(\ket{\Psi_{ij}^{+}}_{\mathsf{X}}\!\bra{\Psi_{ij}^{+}}-\ket{\Psi_{ij}^{-}}_{\mathsf{X}}\!\bra{\Psi_{ij}^{-}}\big)
\end{eqnarray}
where 
\begin{equation}\label{eq:igps}
J=\sum_{i=0}^{d-1}
\ket{i}_{A_{1}}\!\!\bra{i}
\otimes\cdots\otimes
\ket{i}_{A_{m}}\!\!\bra{i},~~
\ket{\Psi_{ij}^{\pm}}_{\mathsf{X}}
=\frac{1}{\sqrt{2}}\ket{i\cdots i}_{\mathbf{A}}\otimes\ket{j\cdots j}_{\mathbf{B}}
\pm\frac{1}{\sqrt{2}}\ket{j\cdots j}_{\mathbf{A}}\otimes\ket{i\cdots i}_{\mathbf{B}}.
\end{equation}
From Eqs.~\eqref{eq:exq1} and \eqref{eq:egrm}, we have
\begin{equation}\label{eq:srex1}
\gamma_{0}\Gamma_{\mathsf{X}}(\tau_{0})-\gamma_{1}\Gamma_{\mathsf{X}}(\tau_{1})
=\frac{1}{d^{m}}\Big(\mathbbm{1}-\frac{2}{d}J
-\frac{2}{d}\sum_{i<j}\ket{\Psi_{ij}^{+}}_{\mathsf{X}}\!\bra{\Psi_{ij}^{+}}
+\frac{2}{d}\sum_{i<j}\ket{\Psi_{ij}^{-}}_{\mathsf{X}}\!\bra{\Psi_{ij}^{-}}\Big)
\succeq0,
\end{equation}
where the positivity is due to $2/d\leqslant1$ and
\begin{equation}
\mathbbm{1}-J-\sum_{i<j}\ket{\Psi_{ij}^{+}}_{\mathsf{X}}\!\bra{\Psi_{ij}^{+}}\succeq0.
\end{equation}

Since the choice of $\mathsf{X}\in\mathbb{X}_{2}$ can be arbitrary, we have
\begin{equation}\label{eq:erp0}
\gamma_{0}\Gamma_{\mathsf{X}}(\tau_{0})-\gamma_{1}\Gamma_{\mathsf{X}}(\tau_{1})\succeq0
\end{equation}
for all bipartitions $\mathsf{X}\in\mathbb{X}_{2}$.
We further note that $\gamma_{0}<1=\frac{2}{n}$ for the case of two-state ensemble when $n=2$.
Thus, from Corollary~\ref{cor:rer}, the ensemble $\mathcal{F}$ in Eq.~\eqref{eq:ex1} can be used to construct a $m$-player data-hiding scheme that hides a classical bit. 

Now, we provide the following example of an ensemble consisting of $2^{t}$ states for an arbitrary positive integer $t$.
This ensemble of multi-party quantum states can be used to hide classical $t$-bit information.
\begin{example}\label{ex:gex}
For positive integers $d,m,n,s,t$ with $d,m\geqslant2$ and $n=2^{t}$,
let us consider the quantum state ensemble $\mathcal{E}=\{\eta_{i},\rho_{i}\}_{i=0}^{n-1}$ consisting of $t$-fold tensor products of $\sigma_{0}$ and $\sigma_{1}$,
\begin{equation}\label{eq:enx2}
\eta_{i}=\prod_{k=1}^{t}\lambda_{b_{k}(i)},~~
\rho_{i}=\bigotimes_{k=1}^{t}\sigma_{b_{k}(i)}
\end{equation}
where  
\begin{eqnarray}
\lambda_{0}=\frac{d^{ms}+(d^{m}-2)^{s}}{2d^{ms}},~
\sigma_{0}=\frac{\mathbbm{1}^{\otimes s}
+(\mathbbm{1}-2\ketbra{\mathrm{GHZ}}{\mathrm{GHZ}})^{\otimes s}}{d^{ms}+(d^{m}-2)^{s}},\nonumber\\
\lambda_{1}=\frac{d^{ms}-(d^{m}-2)^{s}}{2d^{ms}},~
\sigma_{1}=\frac{\mathbbm{1}^{\otimes s}
-(\mathbbm{1}-2\ketbra{\mathrm{GHZ}}{\mathrm{GHZ}})^{\otimes s}}{d^{ms}-(d^{m}-2)^{s}},\label{eq:sel}
\end{eqnarray}
and $b_{k}(i)$ is the $k$th digit in the $t$-digit binary representation of $i$,
\begin{equation}\label{eq:bki}
\sum_{k=1}^{t}b_{k}(i)\cdot 2^{k-1}=i.
\end{equation}
\end{example}

The probabilities $\lambda_{b}$'s and states $\sigma_{b}$'s in Eq.~\eqref{eq:sel} can be expressed using those in Eq.~\eqref{eq:ex1} as 
\begin{eqnarray}
\lambda_{b}=\sum_{\substack{\vec{c}\in\mathbb{Z}_{2}^{s}\\ \omega_{2}(\vec{c})=b}}
\gamma_{\vec{c}},~~
\sigma_{b}=\frac{1}{\lambda_{b}}
\sum_{\substack{\vec{c}\in\mathbb{Z}_{2}^{s}\\ \omega_{2}(\vec{c})=b}}
\gamma_{\vec{c}}\tau_{\vec{c}},~b\in\{0,1\}.\label{eq:eoet}
\end{eqnarray}
In other words, sharing the states $\sigma_{0}$ and $\sigma_{1}$ with the corresponding probabilities $\lambda_{0}$ and $\lambda_{1}$ in Eq.~\eqref{eq:sel} can be performed by $s$ times of sharing the states $\tau_{0}$ and $\tau_{1}$ with the corresponding probabilities $\gamma_{0}$ and $\gamma_{1}$ in Eq.~\eqref{eq:ex1}.
In this case, discriminating $\sigma_{0}$ and $\sigma_{1}$ is equivalent to discriminating the parity of $\omega_{2}(\vec{c})$ from the prepared state $\tau_{\vec{c}}$.
In this sense, we consider  $\rho_{i}$'s in Eq.~\eqref{eq:enx2} as $m$-party quantum states having  $d^{st}$-dimensional subsystems to be shared to the $m$ players $A_{1},\ldots,A_{m}$.
We also note that the states $\rho_{i}$'s in Eq.~\eqref{eq:enx2} are mutually orthogonal because $\sigma_{0}$ and $\sigma_{1}$ are orthogonal.

It is straightforward to verify
\begin{equation}\label{eq:erp}
\lambda_{0}\sigma_{0}=\frac{1}{2d^{ms}}(\Pi_{0}+\Pi_{1}),~~
\lambda_{1}\sigma_{1}=\frac{1}{2d^{ms}}(\Pi_{0}-\Pi_{1})
\end{equation}
where 
\begin{equation}\label{eq:opso}
\Pi_{0}=\mathbbm{1}^{\otimes s},~
\Pi_{1}=(\mathbbm{1}-2\ketbra{\mathrm{GHZ}}{\mathrm{GHZ}})^{\otimes s}.
\end{equation}
For each $i=0,\ldots,n-1$, it follows from Eqs.~\eqref{eq:enx2} and \eqref{eq:erp} that
\begin{equation}\label{eq:erb}
\eta_{i}\rho_{i}=\frac{1}{2d^{mst}}\sum_{\vec{a}\in\mathbb{Z}_{2}^{t}}
\bigotimes_{k=1}^{t}(-1)^{a_{k} b_{k}(i)}\Pi_{a_{k}}
=\frac{1}{2d^{mst}}\sum_{\vec{a}\in\mathbb{Z}_{2}^{t}}
(-1)^{h_{\vec{a}}(i)}\Pi_{\vec{a}}
\end{equation}
where 
\begin{equation}\label{eq:piva}
h_{\vec{a}}(i)=\sum_{k=1}^{t}a_{k} b_{k}(i),~~
\Pi_{\vec{a}}=\bigotimes_{k=1}^{t}\Pi_{a_{k}},~\vec{a}=(a_{1},\ldots,a_{t}).
\end{equation}

To use Corollary~\ref{cor:rer}, let $\mathsf{X}$ be an arbitrary bipartition in $\mathbb{X}_{2}$.
We can see that
\begin{equation}\label{eq:pop}
\Gamma_{\mathsf{X}}(\Pi_{0})=\mathbbm{1}^{\otimes s}\succeq0,~~
\Gamma_{\mathsf{X}}(\Pi_{1})=\Gamma_{\mathsf{X}}(\mathbbm{1}-2\ketbra{\mathrm{GHZ}}{\mathrm{GHZ}})^{\otimes s}\succeq0,
\end{equation}
where the second positivity is from Eqs.~\eqref{eq:exq1} and \eqref{eq:srex1}.
For each $i=0,\ldots,n-1$, we have
\begin{equation}\label{eq:emnp}
\eta_{0}\Gamma_{\mathsf{X}}(\rho_{0})-\eta_{i}\Gamma_{\mathsf{X}}(\rho_{i})
=\frac{1}{2d^{mst}}\sum_{\vec{a}\in\mathbb{Z}_{2}^{t}}
\Big[1-(-1)^{h_{\vec{a}}(i)}\Big]\Gamma_{\mathsf{X}}(\Pi_{\vec{a}})\succeq0,
\end{equation}
where the equality is due to Eq.~\eqref{eq:erb} and the positivity follows from the positivities in \eqref{eq:pop}.
Since the choice of $\mathsf{X}\in\mathbb{X}_{2}$ can be arbitrary, Condition~\eqref{eq:trh} holds for all bipartitions.

We also note that $\eta_{0}<\frac{2}{n}$ for
\begin{equation}\label{eq:ifoi}
\Big(1-\frac{2}{d^{m}}\Big)^{s}<\sqrt[t]{2}-1
\end{equation}
because
\begin{equation}\label{eq:qgq}
\eta_{0}=\lambda_{0}^{t}=\frac{1}{2^{t}}\Big[1+\Big(1-\frac{2}{d^{m}}\Big)^{s}\Big]^{t},~~
\frac{2}{n}=\frac{2}{2^{t}}.
\end{equation}
Thus, from Corollary~\ref{cor:rer}, the quantum state ensemble $\mathcal{E}$ in Eq.~\eqref{eq:enx2} satisfying Inequality~\eqref{eq:ifoi} can be used to construct a $m$-player data-hiding scheme that hides $t$ classical bits.

Even if the desired state ensemble cannot be perfectly realized in practice and the prepared ensemble deviates slightly, our data-hiding scheme remains effective as long as the resulting ensemble satisfies the inequality condition in Eq.~\eqref{eq:pqgc}. 
The primary issue arises when even slight noise or errors disturb the equality condition in Eq.~\eqref{eq:pqgc}, thereby undermining the guarantee of perfect data recovery through global measurements. 
In these cases, increasing the number of state preparation repetitions can further weaken the reliability of global recovery. Fortunately, since our scheme requires only a small number of repetitions, the data recoverability through global measurements is hardly affected.

\section{Conclusions}\label{sec:dsc}
We have provided multi-player quantum data hiding based on nonlocal quantum state ensembles arising from multi-party quantum state discrimination.
Using the bounds on local minimum-error discrimination of multi-party quantum state ensemble, we have established a sufficient condition for a multi-party orthogonal quantum state ensemble to be used to construct a multi-player data-hiding scheme; to make a classical data accessible only by global measurements, the data are perfectly hidden asymptotically(Theorem~\ref{thm:ret} and Corollary~\ref{cor:rer}).
Our results have been illustrated by examples in multi-party quantum systems(Examples~\ref{ex:npt} and \ref{ex:gex}).

Even if the ideal state ensemble cannot be perfectly achieved in practice and minor deviations occur, our data-hiding protocol remains robust as long as the ensemble satisfies the inequality condition given in Eq.~\eqref{eq:pqgc}. Furthermore, because our method requires only a few repetitions to sufficiently limit the information accessible through LOCC measurements, it provides a highly practical approach for the experimental realization of quantum data hiding.

Our results tells us that any multi-party quantum state ensemble satisfying Condition~\eqref{eq:pqgc} is useful in multi-player quantum data hiding.
Therefore, it is natural to ask whether Condition~\eqref{eq:pqgc} is also necessary for an ensemble to be used in quantum data hiding. 
This question is related to the research on the operational meaning of the quantity defined in Eq.~\eqref{eq:qge} in terms of quantum data hiding.
It is also an interesting future work to generalize our results to $(m,k)$-threshold data-hiding schemes where the hidden classical data can be recovered if $k$ or more players collaborate.

\section*{Acknowledgments}
This work was supported by a National Research Foundation of Korea(NRF) grant funded by the Korean government(Ministry of Science and ICT)(No.NRF2023R1A2C1007039). JSK was supported by Creation of the Quantum Information Science R\&D Ecosystem (Grant No. 2022M3H3A106307411) through the National Research Foundation of Korea (NRF) funded by the Korean government (Ministry of Science and ICT).


\end{document}